\documentclass{article}
\usepackage{geometry}
\usepackage{verbatim}
\usepackage{float}
\usepackage{booktabs}
\usepackage{amsthm}
\usepackage{amsmath}
\usepackage{amssymb}
\usepackage{bbm}
\usepackage{graphicx}
\usepackage{setspace}
\usepackage{xcolor}
\usepackage[most]{tcolorbox}
\usepackage[numbers]{natbib}
\usepackage{hyperref}
\usepackage{enumitem}
\usepackage{bbm}
\usepackage{cleveref}
\usepackage{multirow}
\usepackage{subcaption}
\usepackage{algorithm}
\usepackage{algorithmic}
\usepackage[utf8]{inputenc} 
\usepackage[T1]{fontenc}
\usepackage{adjustbox}
\usepackage{makecell}  
\usepackage{pdflscape}
\usepackage{longtable,booktabs,array}
\usepackage{ragged2e}
\usepackage{fvextra}

\usepackage[english]{babel}

\newtheorem{proposition}{Proposition}

\usepackage{tabularx}     

\title{Eligibility-Aware Evidence Synthesis: An Agentic Framework for Clinical Trial Meta-Analysis}
   \author{\small
    Yao Zhao$^{1,*}$, \quad Zhiyue Zhang$^{1,*}$, \quad Yanxun Xu$^{1, \dagger}$\\
   \small $^1$ Department of Applied Mathematics and Statistics, Johns Hopkins University \\
   \small  $^*$ These authors contributed equally to this work\\
    \small  $\dagger$ Correspondence should be addressed to \texttt{yanxun.xu@jhu.edu}} 
    \date{}

\begin{document}
\maketitle

\begin{abstract}
Clinical evidence synthesis requires identifying relevant trials from large registries and aggregating results that account for population differences. While recent LLM-based approaches have automated components of systematic review, they remain fragmented and do not support end-to-end evidence synthesis. Moreover, conventional meta-analysis weights studies by statistical precision without accounting for clinical compatibility reflected in eligibility criteria. We propose EligMeta, an agentic framework that integrates automated trial discovery with eligibility-aware meta-analysis, translating natural-language queries into reproducible trial selection and incorporating eligibility alignment into study weighting to produce cohort-specific pooled estimates. EligMeta employs a hybrid architecture separating LLM-based reasoning from deterministic execution: LLMs generate interpretable selection rules from natural-language queries and perform schema-constrained parsing of trial metadata, while all logical operations, comparisons, eligibility-weight computations, and statistical pooling are executed deterministically to ensure reproducibility. The framework systematically structures free-text eligibility criteria and computes similarity-based study weights reflecting population alignment between target and comparator trials. In a gastric cancer landscape analysis, EligMeta reduced 4,044 candidate trials to 39 clinically relevant studies through rule-based filtering, recovering 13 guideline-cited trials while maintaining strict criterion compliance. In an olaparib adverse events meta-analysis across four trials, eligibility-aware weighting shifted the pooled risk ratio from 2.18 (95\% CI: 1.71–2.79) under conventional Mantel-Haenszel estimation to 1.97 (95\% CI: 1.76–2.20), demonstrating quantifiable impact of incorporating eligibility alignment. EligMeta bridges automated trial discovery with eligibility-aware meta-analysis, providing a scalable and reproducible framework for evidence synthesis in precision medicine.
\end{abstract}

\section{Introduction}
In evidence-based medicine, synthesizing findings from clinical trials is essential for informing trial design, clinical practice guidelines, and regulatory decision-making. By pooling results across studies, evidence synthesis supports assessment of treatment efficacy, safety, and patient outcomes. This need is especially pressing in oncology and other precision medicine domains, where patient populations are increasingly stratified by biomarkers and where novel therapies, including immunotherapies and targeted treatments, are rapidly evolving. In these settings, the central challenge is no longer simply aggregating evidence but ensuring that synthesized estimates reflect the characteristics of the specific populations to which clinical and regulatory decisions will be applied.

In current practice, evidence synthesis relies on systematic reviews and meta-analyses. Systematic reviews aim to identify, screen, and extract relevant studies from large and growing clinical trial corpora~\cite{dickersin1994systematic}. Registries such as ClinicalTrials.gov contain vast volumes of heterogeneous and partially structured data, where key elements including eligibility criteria, endpoints, and outcomes are often embedded in free text with inconsistent phrasing. While efficiently identifying and filtering studies according to complex clinical criteria such as biomarker-defined populations, trial phase, or endpoint definitions is increasingly feasible with modern natural language processing (NLP) approaches, these capabilities have not yet been integrated into meta-analytic frameworks that account for population differences. 

Meta-analysis, the core statistical component of evidence synthesis, also faces limitations. Conventional approaches, including fixed- and random-effects models~\cite{hedges1998fixed, jackson2016extending, inthout2014hartung}, weight studies primarily by statistical precision and do not explicitly account for clinical compatibility across trials, particularly differences in eligibility criteria that define study populations. Consequently, pooled estimates may reflect averages across heterogeneous populations and may be misaligned with a specific target cohort relevant for clinical decision-making or trial design. These challenges motivate the development of methods that jointly address systematic evidence retrieval and statistical integration across clinically heterogeneous studies, while enabling eligibility-aware meta-analysis that explicitly accounts for differences in study populations.

Recent progress in large language models (LLMs) has automated pieces of the systematic review pipeline. However, these efforts remain fragmented across different components of the evidence synthesis process and do not support eligibility-aware meta-analysis. Methods for clinical trial understanding, including rule-based and supervised systems such as CriteriaMapper~\cite{lee2024establishing}, transformer-based models such as BioBERT~\cite{lee2020biobert} and BioGPT~\cite{luo2022biogpt}, and more recent LLM-based frameworks such as AlpaPICO~\cite{ghosh2024alpapico}, AutoCriteria~\cite{datta2024autocriteria}, zero-shot trial patient matching~\cite{wornow2025zero}, Patient2Trial~\cite{datta2025patient2trial}, and TrialGPT~\cite{jin2024matching}, support tasks such as eligibility criteria extraction, trial patient matching, and structured interpretation of clinical trial text at scale. These approaches substantially reduce the effort required to process heterogeneous trial records, but are primarily designed for retrieval and extraction rather than end-to-end evidence synthesis. In parallel, automated meta-analysis pipelines such as SEETrials~\cite{lee2024seetrials} and related approaches~\cite{yun2024automatically, ahad2024empowering}, support numerical result extraction and statistical pooling. However, these methods rely on conventional precision-based weighting, typically inverse-variance, and do not incorporate clinically interpretable measures of population similarity. Complementary work on eligibility criteria representation including concept-based approaches leveraging unified medical language system~\cite{bodenreider2004unified} and clustering-based or embedding-based methods~\cite{hao2014clustering, bornet2025analysis}, enables trial comparison and grouping for applications such as protocol design and recruitment optimization. However, these representations have not been integrated into formal meta-analytic frameworks.

Addressing this gap requires an integrated approach that dynamically coordinates retrieval, reasoning, and statistical computation within a single framework. Emerging agentic systems that leverage LLMs to orchestrate multi-step workflows offer a natural paradigm for such integration~\cite{wang2025survey}. We introduce EligMeta, an eligibility-aware evidence synthesis framework that employs an agentic architecture to bridge automated trial discovery with meta-analysis that explicitly accounts for eligibility-based population heterogeneity. At its core is a novel meta-analytic model that incorporates eligibility criteria directly into study weighting, enabling cohort-specific pooled estimates aligned with a user-specified target population.

EligMeta operates end-to-end from free-text queries to quantitative synthesis. Starting from a user-provided clinical question, the framework retrieves relevant trial records from public registries, applies customizable inclusion and exclusion criteria, extracts structured trial-level information including eligibility criteria, interventions, and outcomes, and performs eligibility-aware meta-analysis in which each trial's contribution is modulated by an interpretable measure of population similarity to the target. To ensure reproducibility and transparency, EligMeta employs a hybrid architecture: LLMs provide high-level reasoning and orchestration for natural language understanding and workflow planning, while numerically critical operations including trial selection, eligibility-weight computation, and statistical estimation are executed through deterministic, version-controlled modules. This design ensures complete provenance and auditability from unstructured inputs to quantitative outputs, yielding pooled estimates that are both statistically robust and clinically meaningful.

Our contributions are threefold. First, we introduce a novel eligibility-aware meta-analysis framework that quantifies eligibility criteria similarity between candidate trials and a user-specified target trial, incorporating these measures into statistical pooling to produce population-specific estimates. To our knowledge, this is the first meta-analytic framework to explicitly incorporate eligibility criteria into study weighting. Second, we develop an end-to-end agentic pipeline that couples automated trial discovery from large registries with criteria-based filtering, structured extraction, and eligibility-aware statistical synthesis, bridging the gap between advances in clinical NLP and formal meta-analytic methods. Third, we propose a hybrid architecture that strategically separates LLM-based reasoning from deterministic computation, addressing a key challenge in agentic systems: balancing the flexibility of LLMs with the reproducibility requirements of statistical analysis.

\section{Methods}
EligMeta transforms free-text clinical queries into cohort-specific, reproducible meta-analytic estimates through a structured, multi-stage pipeline. Current agentic systems face challenges in balancing flexibility with reproducibility. Pre-defined function-calling approaches~\cite{schick2023toolformer} ensure consistency but cannot flexibly handle the heterogeneity and complexity of clinical trial registries, while end-to-end agentic coding systems~\cite{openai2025codex, anthropic2026claudecode} offer adaptability but at the cost of reproducibility and computational efficiency. EligMeta addresses these limitations through a hybrid architecture that combines LLM-based reasoning with deterministic execution of numerically critical operations. As illustrated in Figure~\ref{fig:figure1}, the framework operates in two stages. In the first stage, clinical queries are translated into structured rule sets defining inclusion criteria and a target eligibility profile, which are then applied through deterministic evaluation to retrieve and filter relevant trials from public registries. In the second stage, eligibility criteria are used to quantify population similarity relative to the target, and these similarity measures are then incorporated into statistical pooling to produce eligibility-weighted estimates. The architecture and implementation of each component are described in the subsections that follow. 

\begin{figure}[htbp!]
    \centering
    \includegraphics[width=0.95\linewidth]{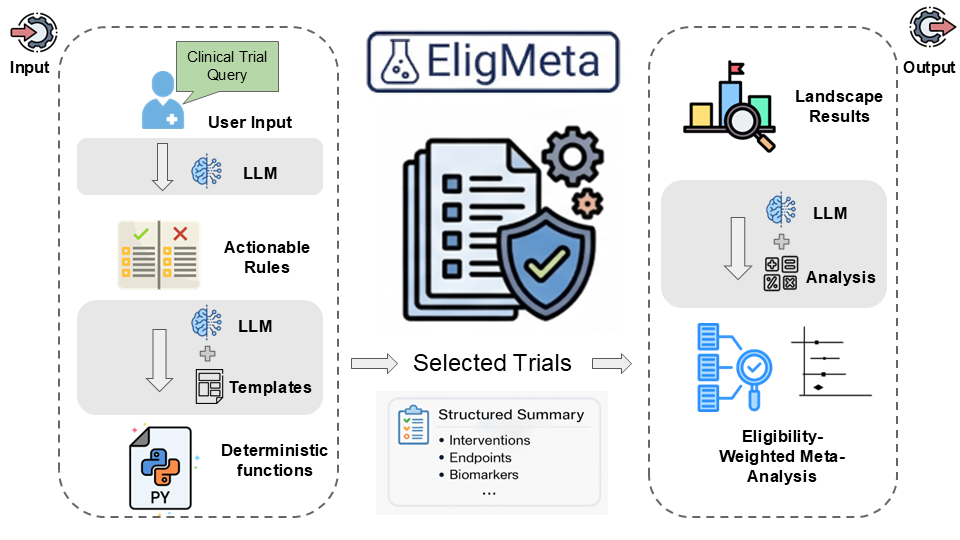}
    \caption{Overview of the EligMeta framework for clinical trial evidence synthesis. A free-text clinical query is translated into structured, auditable rule sets defining study selection criteria and a target eligibility profile. These rules are applied through deterministic evaluation to retrieve and filter relevant trials from public registries, with all intermediate artifacts available for expert review. The resulting studies support downstream landscape analysis and eligibility-aware meta-analysis, in which each study's contribution to the pooled estimate is weighted by its population similarity to the target trial.}
    \label{fig:figure1}
\end{figure}

\subsection{Trial Selection and Structuring}
The first stage of EligMeta transforms a free-text clinical query into an explicit, auditable process for identifying relevant trials from clinical trial registries (Figure~\ref{fig:figure2}). We implement this stage using ClinicalTrials.gov as the primary data source, though the framework can be generalized to other registry and literature databases. As an illustrative example, consider the query: “Identify and evaluate clinical trials of gastric cancer or gastroesophageal junction cancer. Trials must investigate targeted therapies or immunotherapies; report survival outcomes such as progression-free survival (PFS) or overall survival (OS); and enroll biomarker-stratified populations (e.g., HER2-positive, MSI-H, PD-L1 positive). Exclude Phase 3 trials with a small sample size (e.g., fewer than 100 enrolled patients). Include only trials evaluating FDA-approved drugs.” This type of query is representative of clinical evidence synthesis tasks, where investigators define a target population, specify interventions and outcomes of interest, and apply multiple inclusion and exclusion criteria. 

\begin{figure}[htbp!]
    \centering
    \includegraphics[width=0.95\linewidth]{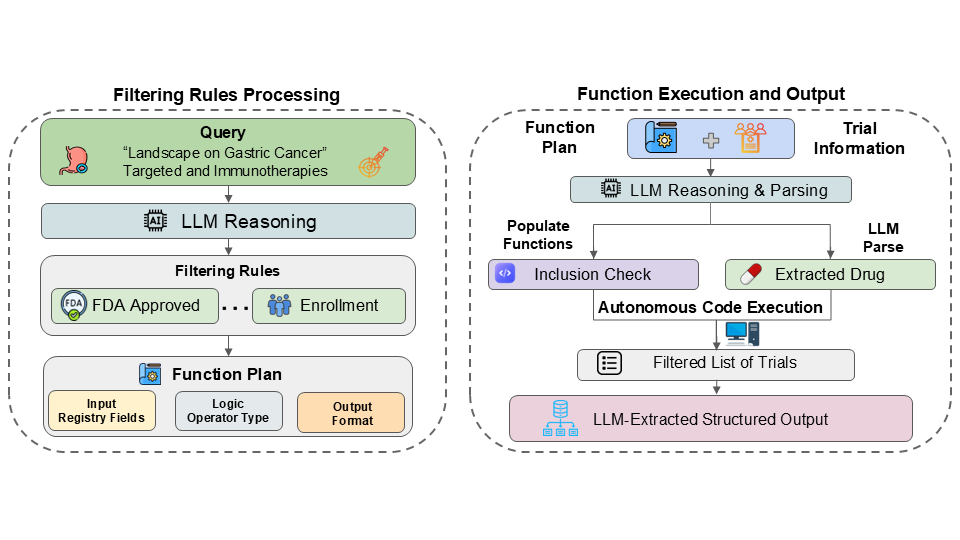}
    \caption{Trial selection and structuring workflow. Left: A free-text clinical query is translated into a reviewed set of structured rules defining study selection criteria and target conditions. Each rule is transformed to a function plan via LLM reasoning. Right: The function plans are applied to trial records using format-constrained parsing and deterministic evaluation, producing a filtered set of trials and structured summaries. LLM usage is restricted to rule specification and parsing, while all selection operations are deterministic and auditable.}
    \label{fig:figure2}
\end{figure}

EligMeta processes this input through two sequential steps that produce intermediate outputs supporting inspection and expert revision. In the first step, the query is translated into a concise, human-readable rule set. Each rule encodes a single selection criterion, such as disease type (e.g., gastric or gastroesophageal cancer), intervention class (e.g., targeted therapy or immunotherapy), endpoint requirement (e.g., PFS or OS), biomarker stratification, trial phase, enrollment threshold, or regulatory status. Collectively, these rules define both the study selection criteria and the characteristics of the target population. The rule set is surfaced for expert review prior to execution, allowing conditions to be inspected and refined to ensure alignment with clinical intent. Additional details on rule generation are provided in Supplementary Section A.1.

In the second step, each rule is operationalized through a paired evaluation template, forming a function plan that specifies how the rule is applied to trial data. Each function plan defines the registry fields to be examined, the parsing instruction for extracting relevant information, the comparison operator, and the expected output type. For unstructured fields such as eligibility criteria and study descriptions, LLMs extract structured values using minimal, format-constrained prompts. All extracted outputs conform to predefined schemas (e.g., boolean indicators, numeric values, or controlled categorical labels) and are subsequently evaluated through deterministic functions. Further details are provided in Supplementary Section A.2. FDA approval status, a common filtering criterion, is inherently dynamic and requires validation against reliable external sources. We address this through an agentic retrieval module (Supplementary Section A.3) that queries Drugs.com as a curated, versioned reference to determine approval status for each drug under investigation.  

Trial selection begins with generic pre-filtering conditions common to clinical trial meta-analysis: inclusion of interventional studies with completed status, exclusion of Phase IV or missing-phase trials, and requirement of available results or publications. EligMeta then applies the rule-specific functions sequentially to the retrieved trial records from ClinicalTrials.gov, yielding a filtered set of studies that satisfy all specified criteria. The resulting studies are compiled into structured summaries capturing key trial attributes, including interventions, biomarkers, endpoints, and study design characteristics. These standardized representations support downstream landscape analysis and serve as input to the eligibility-aware meta-analysis stage. All intermediate outputs, including rule sets, function plans, parsed field values, and trial-level filtering outcomes, are logged to provide a complete audit trail from the original query to the final selected studies. Additional implementation details are provided in Supplementary Section A.4.

\subsection{Eligibility-Aware Meta-Analysis}
The trial selection stage yields a curated set of studies with structured trial-level information. When the objective is quantitative synthesis for a prespecified target population, EligMeta performs eligibility-aware meta-analysis that explicitly accounts for differences in study populations. The central idea is to quantify the alignment between each candidate trial and the target trial using eligibility criteria, and to modulate each study's contribution to the pooled estimate accordingly. Trials whose eligibility profiles more closely match the target trial receive greater weight, while those with substantial differences are down-weighted.

As illustrated in Figure~\ref{fig:figure3}, this stage takes the selected trials as input, with their free-text eligibility criteria, and the eligibility criteria of a user-specified target trial that defines the reference population. An LLM derives explicit, human-readable penalty rules that encode clinically meaningful deviations between each candidate trial's population and the target. To enable deterministic evaluation, the free-text eligibility criteria of all trials are converted into a structured representation in which each criterion is standardized into components such as entity, attribute, and value; details of this structured representation are provided in Supplementary Section B.1.

\begin{figure}[htbp!]
    \centering
    \includegraphics[width=0.95\linewidth]{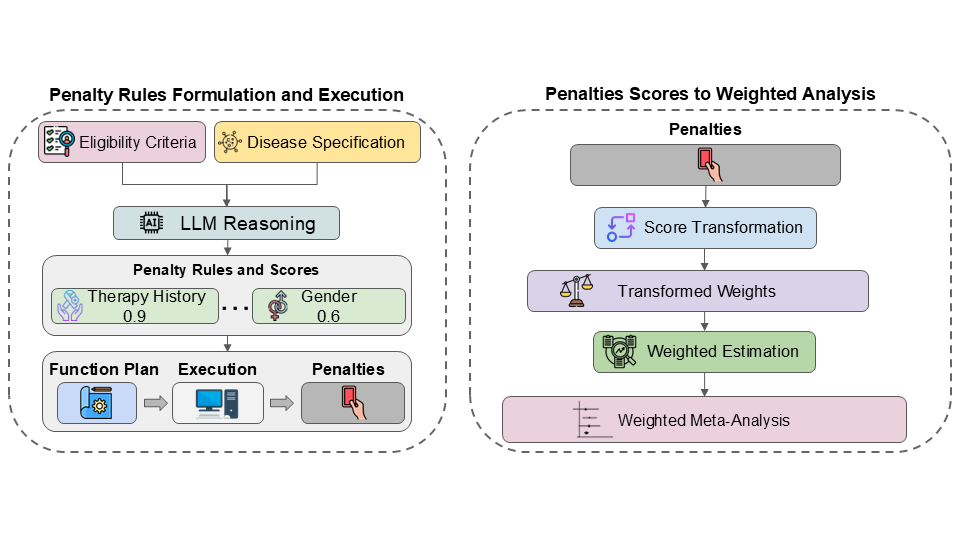}
    \caption{Eligibility-aware meta-analysis workflow. Left: Free-text eligibility criteria, together with a target disease specification, are translated into explicit, human-readable penalty rules that capture clinically meaningful differences between trial populations and the target scenario. These rules are evaluated on a structured representation of eligibility criteria to produce trial-level penalty scores through deterministic operations. Right: Penalty scores are transformed into eligibility weights, which are then incorporated into weighted meta-analysis to produce population-aligned effect estimates. LLM usage is restricted to penalty rule specification and schema-constrained parsing, while penalty evaluation, score transformation, and statistical estimation are deterministic and auditable.}
    \label{fig:figure3}
\end{figure}

Penalty rules are evaluated deterministically on this structured representation. Each rule operates at the criterion level, targeting specific eligibility clauses rather than entire free-text blocks, ensuring that every triggered penalty is directly attributable to an identifiable population difference. A penalty is triggered when at least one relevant criterion satisfies the rule condition. For each trial, the total penalty score is computed by aggregating all triggered penalties, reflecting the cumulative degree of population mismatch relative to the target. A predefined transformation then maps the penalty score $p_i$ to a non-negative eligibility weight$w_i$, subject to $w_i>0$ and $\sum_i w_i =1$, where the transformation is governed by tunable parameters that control the baseline influence and the steepness of down-weighting (Supplementary Section B.2). Because both the penalty rules and their scores are derived relative to the selected trials and the target trial, the resulting weights are case-specific rather than fixed. This criterion-level design improves interpretability and enables precise audit of population discrepancies across studies.

To illustrate this approach, we present an eligibility-weighted Mantel–Haenszel (EW-MH) estimator for binary outcomes~\cite{mantel1959statistical}. The classical Mantel–Haenszel (MH) estimator combines study-specific effect estimates under a common-effect assumption using precision-based weights, without accounting for population alignment with a target trial. We extend this framework by incorporating eligibility weights $w_i$ to adjust each study's contribution according to its population similarity to the target. The EW-MH estimator is defined as:

\begin{equation}
\hat{\theta}_{EW-MH} = \frac{\sum_{i=1}^{k} w_i \frac{a_i n_{0i}}{n_i}}{\sum_{i=1}^{k} w_i \frac{a_i n_{1i}}{n_i}}
\end{equation}

where $a_i$ is the number of events in the treatment arm, $c_i$ is the number of events in the control arm, $n_{0i}$ is the total control sample size, $n_{1i}$ is the total treatment sample size, and $n_i=n_{0i}+n_{1i}$ is the total sample size. When $w_i=1/k$ , Equation (1) reduces to the classical MH estimator. The eligibility-weighted formulation assigns greater influence to studies whose eligibility criteria align more closely with the target trial, yielding pooled estimates more relevant to the intended clinical context. Unbiasedness proof and variance derivation are provided in Supplementary Section B.3. While we focus on the EW-MH estimator for concreteness, the eligibility weighting framework can be incorporated into alternative fixed- or random-effects models, or Bayesian formulations. All steps in this stage, including penalty evaluation, score transformation, and weighted estimation, are executed deterministically and fully logged.

\section{Results}
\subsection{Illustrative Example: Trial Selection for Gastric Cancer}
To illustrate the trial selection and structuring stage, we apply EligMeta to the gastric cancer query introduced in Section 2.1. EligMeta automatically generates six inclusion and exclusion rules capturing disease type, intervention class, survival endpoints, biomarker stratification, enrollment constraints, and regulatory status. The initial query to ClinicalTrials.gov returns 4,044 trials. After generic pre-filtering, the candidate set is reduced to 700 trials. The six rules are then applied sequentially through deterministic evaluation, yielding a final set of 39 eligible studies. The PRISMA-style flow diagram (Figure~\ref{fig:figure4}) provides a transparent account of trial inclusion and exclusion at each step, with each filtering decision directly attributable to specific rule evaluations. Additional details are provided in Supplementary Section C.

\begin{figure}[htbp!]
    \centering
    \includegraphics[width=0.95\linewidth]{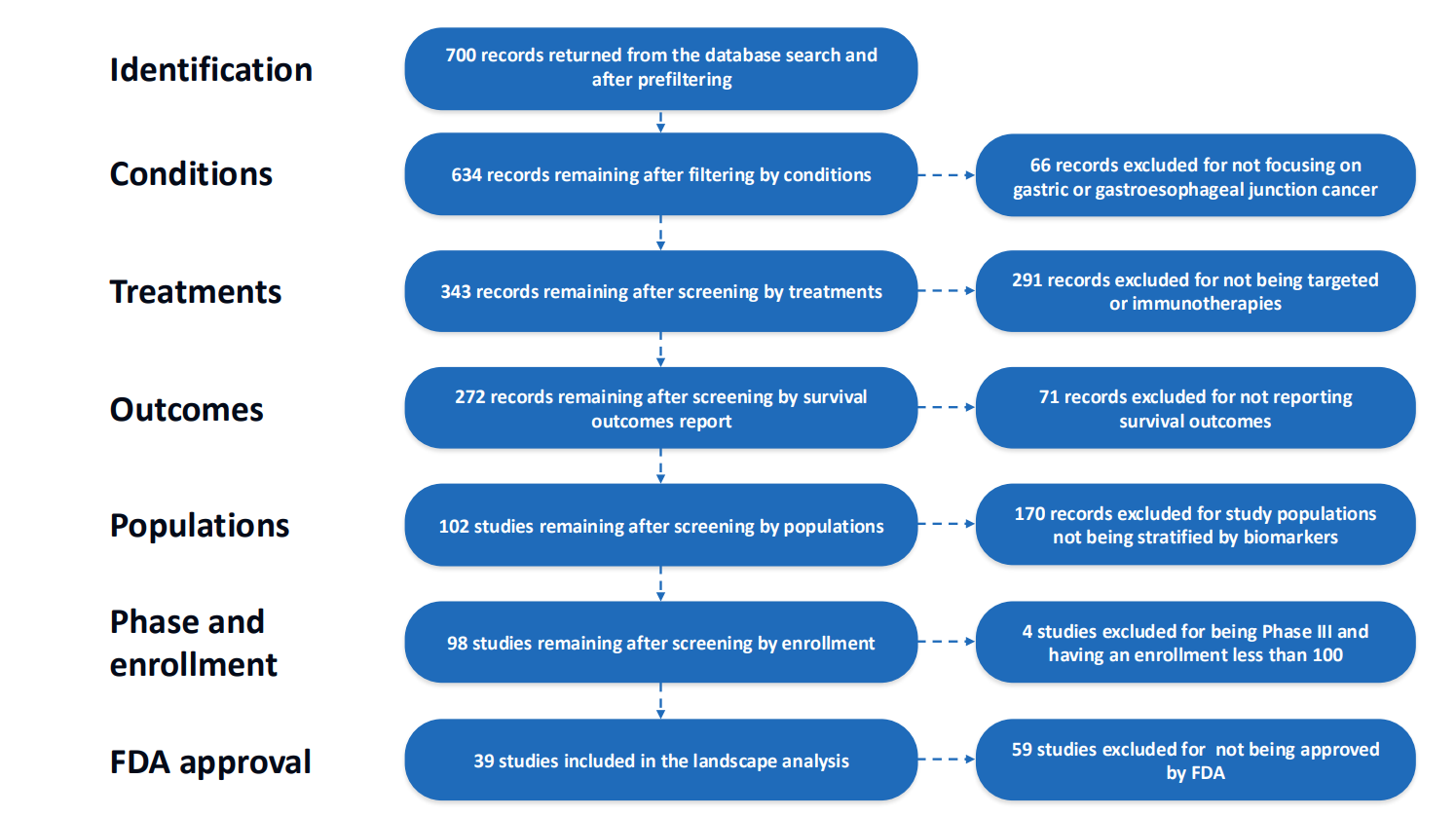}
    \caption{Stepwise filtering results for the gastric cancer use case. The diagram follows the six rules described in the text, showing the number of clinical trials remaining and excluded at each filtering step.}
    \label{fig:figure4}
\end{figure}

Table~\ref{tab: representive_five} presents five representative trials from selected trials in a structured evidence format containing standardized fields, including interventions, biomarkers, study phase, enrollment size, and survival endpoints, that support downstream evidence review and quantitative synthesis. Complete outputs for all 39 selected trials are available in the project repository. The selected trial set demonstrates strong clinical relevance and alignment with expert-curated evidence. Among the 39 trials, 13 are explicitly cited in the National Comprehensive Cancer Network (NCCN) guideline~\cite{nccn_gastric_2026} targeted therapy or immunotherapy sections, indicating that EligMeta recovers guideline-level evidence directly from registry data using only a free-text query, without requiring prior knowledge of study names, NCT numbers, or publication references. The remaining 26 trials comprise early-phase studies (Phase I/II) or trials with incomplete reported data that may not yet be incorporated into clinical guidelines but represent emerging evidence in the field.

\begin{table}[htbp!]
\centering
\tiny
\caption{Representive eligible trials returned by the landscape stage of EligMeta.}
\label{tab: representive_five}
\begin{adjustbox}{max width=\textwidth}
\begin{tabularx}{\textwidth}{X X X X X X X X}
\toprule
\makecell[l]{\textbf{NCT} \\ \textbf{Number}} &
\textbf{Intervention(s)} &
\textbf{Biomarker} &
\textbf{Condition} &
\makecell[l]{\textbf{Study} \\ \textbf{Phase}} &
\makecell[l]{\textbf{Enrollment} \\ \textbf{Size}} &
\textbf{Status} &
\makecell[l]{\textbf{Endpoints:} \\ \textbf{PFS} \\ }\\
\midrule

NCT03615326 &
Pembrolizumab vs Trastuzumab &
HER2 &
Gastric Cancer &
PHASE3 & 738 & COMPLETED &
median PFS 10.0 mo vs 8.1 mo, HR 0.73 (95\% CI 0.61-0.87), p=0.0002 \\

\midrule
NCT03329690 &
DS-8201a vs Irinotecan vs Paclita &
HER2 &
gastric cancer, gastroesophageal junction adenocarcinoma &
PHASE2 & 233 & COMPLETED &
median PFS 5.6 mo (95\% CI 4.3-6.9) \\

\midrule
NCT03504397 &
zolbetuximab vs placebo &
Claudin 18.2, HER2-Negative &
stomach cancer, GEJ cancer &
PHASE3 & 565 & ACTIVE NOT  RECRUITING &
median PFS 11.04 mo vs 8.94 mo, HR 0.734 (95\% CI 0.591-0.910), p=0.0024 \\

\midrule
NCT03653507 &
zolbetuximab vs placebo &
Claudin 18.2, HER2 &
gastric cancer, GEJ cancer &
PHASE3 & 507 & ACTIVE NOT RECRUITING &
median PFS 8.21 mo vs 6.80 mo, HR 0.687 (95\% CI 0.544-0.866), p=0.0007 \\

\midrule
NCT04014075 &
Trastuzumab deruxtecan &
HER2 &
gastric cancer, gastroesophageal junction cancer &
PHASE2 & 379 & COMPLETED &
median PFS 5.5 mo (95\% CI 4.2-7.3) \\
\bottomrule
\end{tabularx}
\end{adjustbox}
\vspace{2pt}
\raggedright
\footnotesize \textit{Abbreviations:} PFS, progression-free survival; GEJ, gastroesophageal junction; CI, confidence interval; HR, hazard ratio.
\end{table}

The selected trials span multiple biomarker classes (HER2-positive, MSI-H, PD-L1 CPS thresholds), intervention types (monoclonal antibodies, checkpoint inhibitors, targeted agents), and disease settings (locally advanced, metastatic, adjuvant), illustrating the framework's capacity to navigate complex clinical stratification schemes embedded in heterogeneous free-text eligibility criteria. 

To assess the performance of EligMeta relative to general-purpose AI systems, we conducted two parallel gastric cancer landscape analyses using (i) GPT-5.4 with Deep Research~\cite{openai2026gpt54}, representing a general-purpose LLM-based research assistant,  and (ii) Codex~\cite{openai2025codex}, representing an agentic coding paradigm for automated pipeline construction. Both analyses were conducted using the same query; complete implementation details and comparison outputs are provided in Supplementary Section D.  GPT-5.4 only identified 11 trials, of which 8 were covered by the NCCN guidelines, whereas Codex identified 28 trials, of which only 7 were covered. Although both methods captured some pivotal trials, such as KEYNOTE-811 (NCT03615326) and DESTINY-Gastric01 (NCT03329690), they missed several important studies, including KEYNOTE-059 (NCT02335411), KEYNOTE-859 (NCT03675737), and RATIONALE-305 (NCT03777657). In addition, the Codex output included the FLX475 trial (NCT04768686), despite FLX475 not being FDA-approved, thereby violating the predefined inclusion criterion restricting trials to those evaluating FDA-approved therapies.

\subsection{Eligibility-Aware Meta-Analysis: Olaparib Adverse Events}
To demonstrate the eligibility-aware meta-analysis framework, we apply EligMeta to the study of Ricci et al.~\cite{ricci2020specific}, which conducted a systematic review and meta-analysis of maintenance olaparib versus placebo in advanced malignancies. This example provides a controlled setting to evaluate how eligibility-aware weighting affects pooled estimates compared to conventional precision-based meta-analysis. Using the query-driven pipeline described in Section 3.1, EligMeta identifies five eligible trials, including all four randomized controlled trials analyzed in the original study: Golan et al.~\cite{golan2019maintenance}, Moore et al.~\cite{moore2018maintenance}, Ledermann et al.~\cite{ledermann2014olaparib}, and Pujade-Lauraine et al.~\cite{pujade2017olaparib}. We focus on these four studies to enable direct comparison with the published meta-analysis. The eligibility criteria of all four trials are provided in Supplementary Section E.

\paragraph{Eligibility-based weighting.} We designate Golan et al.~\cite{golan2019maintenance} as the target trial and compute penalty scores quantifying each study’s eligibility mismatch relative to this target. The LLM generates five interpretable penalty rules, each capturing a clinical meaningful population difference:
\begin{enumerate}[label=\textbf{R\arabic*.}]
  \item Score (0.9): 
        Assign a high penalty for prior trials that require patients to have completed multiple lines of platinum-based therapy, whereas the target trial includes patients receiving initial chemotherapy for metastatic disease.
  \item Score (0.7): 
        Assign a moderate penalty for prior trials that only include female patients with ovarian, fallopian tube, or primary peritoneal cancer, while the target trial focuses on pancreas adenocarcinoma.
  \item Score (0.6): 
        Assign a moderate penalty for prior trials that require patients to have a stable or maintained response to their last chemotherapy course, whereas the target trial allows patients without evidence of disease progression.
  \item Score (0.6): 
        Assign a moderate penalty for prior trials that require randomization within a specific timeframe after the last chemotherapy dose, which is not specified in the target trial.
  \item Score (0.5): 
        Assign a moderate penalty for prior trials that exclude patients with stable disease on post-treatment scans, while the target trial includes patients with measurable or non-measurable disease.
\end{enumerate}

Evaluating these rules deterministically on structured eligibility criteria yields total penalty scores:\[
\bigl(p_{\text{Golan}},\;p_{\text{Moore}},\;p_{\text{Ledermann}},\;p_{\text{Pujade}}\bigr)
  \;=\;(0.0,\,2.8,\,1.8,\,2.8),
\]. The target trial incurs zero penalty by definition, while other studies are penalized according to their eligibility mismatch. To incorporate these penalties into the meta-analysis, we transform the penalty scores into eligibility weights using the mapping described in Section 2.2 and detailed in Supplementary Section B. With parameters $\gamma=0.5$ and $B=20$, this yields normalized eligibility weights: 
\begin{equation}
\bigl(w_{\text{Golan}},\;w_{\text{Moore}},\;w_{\text{Ledermann}},\;w_{\text{Pujade}}\bigr)
  \;=\;(0.52,\,0.13,\,0.21,\,0.13).
  \label{eq: computed_weights}
\end{equation}
Where $\sum_i w_i = 1$. These weights modulate each study's contribution to the pooled estimate according to its population alignment with the target.

The eligibility weights are then combined with conventional precision-based weights in the EW-MH estimator (Equation 1, Section 2.2), which incorporates both eligibility weights (from population alignment) and precision-based weights (from sample size and event rates). To visualize and compare each study's contribution under both approaches, we compute display weights, which represent the percentage contribution of each study normalized to sum to 100\%. For the conventional MH estimator, the display weight for trial i is:
\begin{equation*}
    w_{i,MH}^{disp}=\frac{\frac{c_i n_{1i}}{n_i}}{\sum_j \frac{c_j n_{1j}}{n_j}}\times100,
\end{equation*}
where $c_i$ is the number of control events, $n_{1i}$ is the treatment sample size, and $n_i$ is the total sample size. For the eligibility-weighted MH estimator, the display weight incorporates the eligibility weight $w_i$:
\begin{equation}
    w_{i,EW}^{disp}=\frac{w_i\frac{c_i n_{1i}}{n_i}}{\sum_j w_j \frac{c_j n_{1j}}{n_j}}\times100.
\end{equation}
When all eligibility weights equal 1, the two formulas are identical. Table~\ref{tab: ew_weights} presents the display weights under both approaches. Under eligibility-aware weighting, Golan 2019's display weight increases from 13.6\% to 34.6\% due to its perfect alignment with the target population, while Moore 2018 decreases from 31.7\% to 20.5\% due to substantial eligibility mismatch.

\begin{table}[htbp!]
  \centering
  \caption{Olaparib maintenance treatment vs placebo, outcome: risk ratio (RR) of vomiting (all-grade). $w_{i,MH}^{disp}$: conventional precision-based display weights from Ricci et al.~\cite{ricci2020specific}. $w_{i,EW}^{disp}$: eligibility-weighted display weights using Golan et al.~\cite{golan2019maintenance} as target population. CI: 95\% confidence interval.}
  \label{tab: ew_weights}
  \scriptsize
  \begin{tabular}{@{}lrrrrrrrrr@{}}
     &\multicolumn{2}{c}{\textbf{Olaparib}} & \multicolumn{2}{c}{\textbf{Placebo}} & \\
    \toprule
    \textbf{Study} & \textbf{Events} & \textbf{Total} & \textbf{Events} & \textbf{Total} &
     \textbf{RR} &
    \textbf{CI$_{\text{low}}$} & \textbf{CI$_{\text{high}}$} &
    \textbf{$w_{i,MH}^{disp}$} &
    \textbf{$w_{i,EW}^{disp}$} \\
    \midrule
    Golan 2019            &  18 &  91 &  9 &  60  & 1.32 & 0.63 & 2.74 & 13.6 &34.6 \\
    Moore 2018            & 104 & 260 & 19 & 130 & 2.74 & 1.76 & 4.26 & 31.7 & 20.5 \\
    Ledermann 2014        &  43 & 136 & 18 & 128 & 2.25 & 1.37 & 3.69 & 23.2 & 24.4 \\
    Pujade–Lauraine 2017  &  73 & 195 & 19 &  99 & 1.95 & 1.25 & 3.04 & 31.5 & 20.4 \\
    \midrule
    \multicolumn{5}{@{}l}{\textit{Total (original result from \citet{ricci2020specific})}} &
    2.18 & 1.71 & 2.79 & & \\
    \multicolumn{5}{@{}l}{\textit{Total (eligibility‐weighted, ours)}} &
    1.97 & 1.76 & 2.20 & & \\
    \bottomrule
  \end{tabular}
\end{table}

\paragraph{Impact on pooled estimates.} We analyze the outcome of all-grade vomiting, with event data and pooled estimates presented in Table~\ref{tab: ew_weights} and Figure~\ref{fig:figure5}. Under classical precision-based MH weighting, the pooled risk ratio is $\hat{\theta}_{MH}$=2.18 (95\% CI 1.71–2.79). Under eligibility-aware weighting: $\hat{\theta}_{EW-MH}$=1.97 (95\% CI 1.76–2.20). This shift reflects a fundamental difference in how evidence is synthesized. The EW-MH estimator assigns greater influence to studies whose eligibility criteria align more closely with the target population, while down-weighting studies with greater eligibility mismatch. In this example, the lowest-risk trial (Golan 2019, RR = 1.32) becomes the dominant contributor due to its alignment with the target scenario. Conversely, Moore 2018, which has both a higher risk ratio and greater eligibility mismatch, contributes less to the pooled estimate.

\begin{figure}[htbp!]
    \centering
    \includegraphics[width=0.95\linewidth]{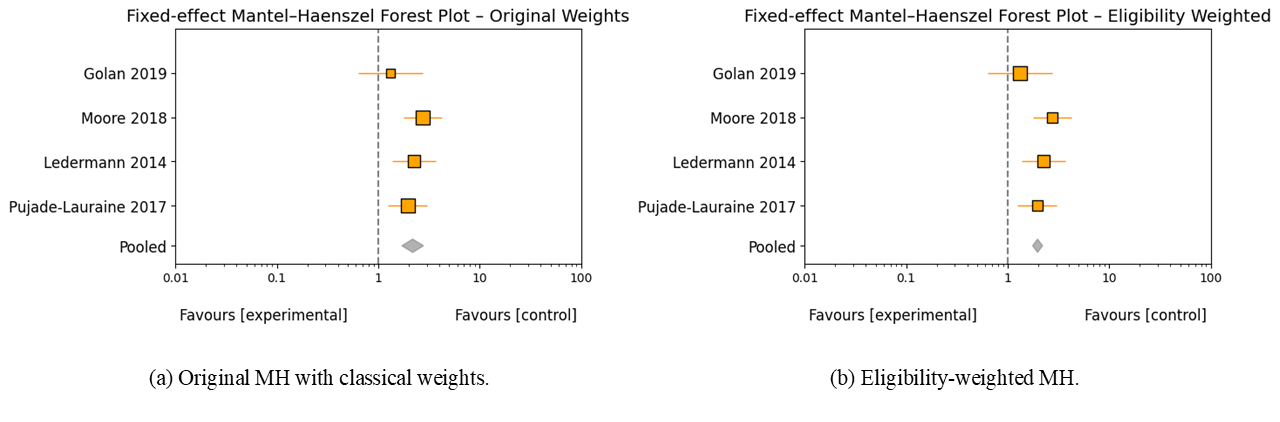}
    \caption{Comparison of pooled risk ratio estimates for all-grade vomiting under classical precision-based weighting (left) and eligibility-aware weighting (right), for olaparib maintenance versus placebo.}
    \label{fig:figure5}
\end{figure}

This example demonstrates that conventional precision-based meta-analysis may produce pooled estimates dominated by statistically precise but clinically misaligned studies. Moore 2018, while contributing 31.7\% under classical weighting due to large sample size, represents a fundamentally different patient population (ovarian cancer, multiple prior lines of therapy) than the target pancreatic cancer scenario in Golan 2019. Eligibility-aware weighting explicitly accounts for this mismatch, producing a pooled estimate more representative of the target population. Critically, this adjustment is fully interpretable: each study's down-weighting can be traced to specific eligibility criteria differences (prior treatment lines, cancer type, response requirements), providing transparent rationale for the reweighted synthesis. While the eligibility weighting framework is demonstrated here using the EW-MH estimator, it can be incorporated into alternative fixed- or random-effects models or Bayesian formulations. By integrating structured eligibility information into meta-analysis, EligMeta extends classical methods to produce evidence summaries that are both statistically principled and explicitly aligned with target clinical populations.

\section{Discussion}
We present EligMeta, an eligibility-aware framework for clinical trial evidence synthesis that integrates structured rule-based trial selection with population-aligned meta-analysis. By combining LLM-assisted reasoning with deterministic execution, the framework translates free-text clinical queries into explicit, auditable workflows and produces reproducible, clinically meaningful statistical estimates.

A key contribution of this work is the introduction of eligibility-aware weighting in meta-analysis. Conventional meta-analysis weights studies primarily by statistical precision, implicitly assuming that all included trials target comparable populations. EligMeta challenges this assumption by explicitly quantifying population similarity through structured eligibility criteria and adjusting each study's contribution accordingly. As demonstrated in Section 3.2, eligibility-aware weighting can produce meaningfully different pooled estimates: the risk ratio for olaparib-related vomiting shifted from 2.18 to 1.97 when studies were weighted by alignment with a pancreatic cancer target population rather than by precision alone. Critically, this adjustment is fully interpretable: each study's influence can be traced to specific eligibility differences such as prior treatment lines, cancer type, and response requirements, providing transparency rarely achieved in traditional meta-analytic pipelines.

Beyond methodological innovation, EligMeta establishes a reproducible and auditable workflow for evidence synthesis. The framework constrains LLM usage to rule formulation and schema-constrained parsing while executing all selection, weighting, and statistical computations deterministically. This design reduces hallucination risk while maintaining flexibility in handling complex clinical queries, enabling integration of language models into high-stakes biomedical applications without sacrificing reliability. LLM-driven components enable adaptive reasoning over heterogeneous eligibility text, while structured schemas and deterministic scoring ensure consistent, reproducible weighting. Compared to common accessible AI deep research tools or agentic coding clients, EligMeta provides a more reliable and task-aligned approach for meta-analysis with significantly lower reasoning cost through more efficient LLM models.

Several limitations and future directions warrant consideration. First, the framework depends on accurate extraction and structuring of eligibility criteria, which may be limited by inconsistencies in trial reporting quality. Extending the structured representation to capture more nuanced eligibility logic (e.g., nested conditions, time-dependent criteria) would improve fidelity. Second, the current implementation relies on curated registries such as ClinicalTrials.gov. Future work will integrate direct literature retrieval to recover information often missing from registries, including numbers at risk, follow-up windows, stratification details, complete adverse event tables with grade and timing, protocol amendments, subgroup effect estimates, and additional secondary outcomes. Linking peer-reviewed publications to registry entries would reduce missingness and selective reporting while enabling living updates as new evidence emerges. Third, the choice of penalty rules, severity scores, and transformation parameters introduces modeling decisions that may influence results. We recommend sensitivity analyses across parameter ranges and incorporation of domain expert input to calibrate penalty scores. The interpretable rule structure facilitates expert review and iterative refinement. Lastly, while we demonstrate the eligibility weighting framework using binary outcomes and the Mantel-Haenszel estimator, extension to continuous outcomes, time-to-event data, and alternative meta-analytic models (e.g., random-effects, Bayesian hierarchical models) would broaden applicability.

\bigskip

\noindent\textbf{Code Availability:} The code used in this study is available at: \url{https://github.com/JackZhao1998/EligMeta.git}

\vspace{4pt}
\noindent\textbf{Data Availability:} All datasets analyzed in this study are publicly available from \url{https://clinicaltrials.gov/}.

\bibliography{main_paper}

\newpage
\clearpage
\setcounter{page}{1}
\setcounter{section}{0}
\setcounter{subsection}{0}
\setcounter{figure}{0}
\setcounter{table}{0}

\setcounter{page}{1}
\renewcommand{\thesection}{\Alph{section}}
\renewcommand{\thesubsection}{\Alph{section}.\arabic{subsection}}

\renewcommand{\thefigure}{S\arabic{figure}}
\setcounter{figure}{0} 

\renewcommand{\thetable}{S\arabic{table}}
\setcounter{table}{0} 

\begin{center}
    {\LARGE Supplementary Materials for ``Eligibility-Aware Evidence Synthesis: An Agentic Framework for Clinical Trial Meta-Analysis"}
\end{center}

\section{EligMeta Implementation Details}
\subsection{LLM-based Rule Generation}
To enable systematic and reproducible landscape analysis, we developed a rule generation module that uses carefully designed prompts to guide a large language model (LLM), described in Section 2.1 of the main text. Rather than relying on ad hoc keyword matching or manual filtering, this component operationalizes clinical intent into explicit inclusion and exclusion criteria that can be consistently applied across trial databases.

The prompt is designed to mimic the reasoning process of a clinical trial analyst. It guides the language model to interpret an under-specified research question (e.g., comparing therapeutic strategies or exploring treatment patterns within a disease area) and reformulate it into a comprehensive yet concise set of actionable rules. These rules are expressed in natural language and follow a standardized format, ensuring interpretability for domain experts while remaining amenable to downstream computational processing. 

Several key design principles were incorporated:
\begin{itemize}
    \item \textbf{Clinical relevance prioritization.} The prompt explicitly instructs the model to ground rule generation in standard clinical trial design practices. This includes identifying appropriate trial phases, intervention types, endpoints, and comparator structures that are commonly used in the disease area of interest.

    \item  \textbf{Separation of study-level and patient-level criteria.} To maintain consistency with evidence synthesis workflows, the prompt restricts rule generation to study-level selection logic (e.g., trial design, interventions, outcomes), while excluding patient-level eligibility criteria (e.g., age, laboratory values), which are typically handled separately in meta-analysis protocols.

    \item \textbf{Generalizable disease and treatment framing.} The prompt encourages broad but precise phrasing (e.g., "trials that investigate the disease of interest" rather than overly restrictive subtypes), unless the user explicitly specifies narrower conditions. This reduces the risk of unintentionally excluding relevant studies.

    \item \textbf{Avoidance of redundant or pre-filtered criteria.} The prompt explicitly excludes rules related to study completion status or publication availability, as these are handled upstream through structured queries to ClinicalTrials.gov. This ensures that generated rules focus only on trial selection dimensions that require complex semantic reasoning.
\end{itemize}

The complete prompt used for rule generation is presented below.

\begin{tcolorbox}[
    colback=blue!5,
    colframe=blue!75!black,
    title={Rule Generation Prompt},
    fonttitle=\bfseries,
    boxrule=0.6pt,
    arc=2pt,
    left=6pt,
    right=6pt,
    top=6pt,
    bottom=6pt
]
\small

You are a biomedical trial analyst helping define selection logic for landscape analysis of clinical trials.

You are given:
\begin{itemize}
    \item A vague user query describing a clinical research interest
    \item A disease context (e.g., lung cancer, breast cancer)
\end{itemize}

Your job is to infer a \textbf{comprehensive set of rules} for selecting trials relevant to this question. Usually, the rules are clear and simple.

Each rule should:
\begin{itemize}
    \item Be specific and clearly actionable (e.g., ``Include only phase III trials'')
    \item Be based on standard clinical trial practice in the given disease
    \item Include trial design features (e.g., control arms, endpoints, blinding) where relevant
    \item Only generate rules that determine whether a clinical trial should be included in the meta-analysis --- do not describe internal trial eligibility
\end{itemize}

Do not include any of the following rules, as those are already taken care of in the pre-filtering step:
\begin{itemize}
    \item Check if the study has been completed
    \item Whether the study includes publications
\end{itemize}

\textbf{Important guidance:}
\begin{itemize}
    \item If the user query compares multiple treatments (e.g., pembrolizumab monotherapy vs.\ combination therapy vs.\ chemotherapy), then any two of them in the comparison is fine; all arms do not need to appear in the same trial.
    \item Usually, unless contradicted by the user intent or expert comment, exclude trials that do not involve the treatment of interest and trials that do not involve the disease of interest, to avoid false positives.
    \item When defining disease or treatment rules, prefer general phrasing unless the user explicitly specifies narrower criteria.
    \item Prioritize wording such as ``study'', ``deal with'', or ``involves'', rather than overly restrictive phrasing such as ``focus on''.
\end{itemize}

\textbf{Output requirements:}
\begin{itemize}
    \item Output a JSON list of plain-English rules
    \item Use phrases like ``Include trials that\ldots'' or ``Exclude trials that\ldots''
    \item Do not use markdown, explanation, or backticks
\end{itemize}
\end{tcolorbox}

\subsection{Function Plan Generation}
To operationalize the generated selection rules into executable logic, we designed a module that converts each plain-English rule into a structured filtering plan. While the previous step focuses on what trials should be included or excluded, this stage defines how that decision is implemented programmatically. The core idea is to decompose each rule into a set of minimal, machine-executable components that can be applied to trial metadata in a consistent and reproducible manner.

Instead of directly generating code, the system produces an intermediate representation as a structured plan that separates information extraction from logical evaluation. Introducing this planning step is essential to ensure reliability, interpretability, and consistency in the execution process. Directly generating executable code from natural-language rules often leads to brittle implementations, where parsing logic, comparison operations, and control flow are entangled and difficult to verify or reuse. By separating rule interpretation into a structured intermediate plan, the system enforces a clear decomposition between information extraction and logical evaluation, enabling modular execution and easier debugging.

The prompt design emphasizes three core principles: (1) modularity: each rule is decomposed into minimal, well-defined conditions to ensure that information extraction and logical evaluation remain explicitly separated; (2) efficiency: it prioritizes selective use of relevant metadata fields and supports structured logical operators (e.g., sequential evaluation) to enhance computational efficiency; and (3) execution fidelity: the design ensures that generated plans are interpretable, consistent, and directly translatable into reliable downstream execution.

The output JSON defines a structured execution plan for downstream filtering. Specifically, \texttt{filter\_name} provides a unique identifier for the generated function, while \texttt{logical\_operator} determines how multiple condition checks should be combined during execution. The \texttt{conditions} field contains the core operational units, where each condition specifies which trial metadata fields should be attended to, what the LLM parser should extract, and how the extracted result should be compared against a predefined target. This design allows the downstream system to first query only the relevant metadata, then apply standardized comparison logic in a deterministic manner. For conditions involving predefined membership sets, \texttt{membership\_list\_name} further enables execution outside the LLM, ensuring that list-based matching remains explicit and reproducible. 

The plan generation is applied to each individual rule from the previous list. The resulting plans are then aggregated with two additional pieces of information—(i) the condition (i.e., the disease or indication under study) and (ii) the treatment (i.e., the intervention or drug being investigated), both parsed by the LLM—to form the final executable function plan (See example plans\_dict.json in GitHub Repository). Together, these output items transform a natural-language filtering rule into a modular, interpretable, and directly executable plan.

\begin{tcolorbox}[
    colback=blue!5,
    colframe=blue!75!black,
    title=\textbf{Prompt for Rule-to-Function Planning (Part 1)},
    fonttitle=\bfseries,
    sharp corners,
    boxrule=0.8pt
]

\textbf{System prompt:} \\
You are a clinical trial filtering planner.

You will be given a single inclusion/exclusion rule for selecting clinical trials (e.g., for landscape analysis).

Your task is to produce a \textbf{structured execution plan} that enables a function writer to:
\begin{itemize}[leftmargin=*, itemsep=2pt]
    \item Extract relevant information from trial metadata using an LLM parser
    \item Define a comparison check against a target value
    \item Specify how multiple checks are combined using logical operators
\end{itemize}

\textbf{Output format:} \\
Return a JSON object with the following fields:
\begin{itemize}[leftmargin=*, itemsep=2pt]
    \item \texttt{filter\_name}: snake\_case name of the filter function
    \item \texttt{logical\_operator}: one of \texttt{["default", "sequential"]}. Specifies how to combine the results of multiple conditions.
    \item \texttt{conditions}: a list of condition dictionaries, each containing:
    \begin{itemize}[leftmargin=1.5em, itemsep=1pt]
        \item \texttt{fields\_to\_attend} : list of trial metadata fields (e.g., \texttt{["Title", "Summary", "Eligibility"]}) relevant *only* for this specific condition.
        \item \texttt{llm\_instruction}: strict instruction for the parser, tailored to extract information needed *only* for this specific condition. Must return either 'Yes'/'No', a number, or a specific string/phrase.
        \item \texttt{comparison}: one of \texttt{["greater\_than", "less\_than", "equal\_to", "not\_equal", "presence\_match", "in\_list"]}
        \item \texttt{target\_value}: the fixed comparison value (must match expected parser output). Can be string, number, or boolean. For "in\_list" comparison, target\_value can be omitted.
        \item \texttt{membership\_list\_name} (optional): For "in\_list" comparison, specify the predefined list name to use for checking membership.

    \end{itemize}
\end{itemize}

\end{tcolorbox}

\begin{tcolorbox}[
    colback=blue!5,
    colframe=blue!75!black,
    title=\textbf{Prompt for Rule-to-Function Planning (Part 2)},
    fonttitle=\bfseries,
    sharp corners,
    boxrule=0.8pt
]
If \texttt{logical\_operator = "default"}, only one condition should be present.

\textbf{You may choose from the following trial metadata fields:}
\begin{itemize}[leftmargin=*, itemsep=2pt]
    \item Title:  Short free-text title of the clinical trial.
    \item Summary: A short paragraph summarizing the study goal, design, and population.
    \item Eligibility: Full free-text inclusion and exclusion criteria. May contain structured bullet points.
    \item Conditions: List of disease conditions the study targets (e.g., \texttt{["Non Small Cell Lung Cancer"]}).
    \item Interventions: A list of dictionaries describing interventions used in the study. Each includes "type" (e.g., DRUG, BIOLOGICAL) and "name" (e.g., "Pembrolizumab").
    \item Study Type: Categorical string such as "INTERVENTIONAL", "OBSERVATIONAL", etc.
    \item Allocation: Describes whether the trial is "RANDOMIZED", "NON-RANDOMIZED", or blank.
    \item Phase: List of strings (e.g., 
    \texttt{["PHASE2", "PHASE3"]}) indicating which trial phases apply.
    \item Primary Outcome: Study primary outcome with measure, description, and time frame.
    \item Secondary Outcome: Study secondary outcome with measure, description, and time frame.
    \item Adverse Event: Adverse event reporting description.
    \item Publications: List of publications related to the trial study. Could potentially be helpful as complimentary information to Summary, Conditions, and Eligibility.
    \item Enrollment: Number of enrolled patients.
\end{itemize}

\textbf{Reference Drug lists:}
\{This input is dynamic from Drug Database via the dedicated Retrieval Agent.\}
\end{tcolorbox}

\begin{tcolorbox}[
    colback=blue!5,
    colframe=blue!75!black,
    title=\textbf{Prompt for Rule-to-Function Planning (Part 3)},
    fonttitle=\bfseries,
    sharp corners,
    boxrule=0.8pt
]
\begin{itemize}
    \item For any condition using "comparison": "in\_list", the llm\_instruction must instruct the parser to extract and return only the drug or intervention names under investigation (not Yes/No), as a Python list (preferred, e.g., \texttt{["trastuzumab", "cisplatin"]}) or as a single comma-separated string. Do NOT instruct the parser to answer whether any are FDA-approved, or to perform the membership check itself. The function writer will perform the membership check outside the LLM using the extracted names and the predefined list. 

    \item If a rule requires a drug, intervention, or other entity to be a member of a predefined list, set "comparison" to "in\_list", and add "membership\_list\_name" with the correct list variable name. The function writer will then perform the membership check outside the LLM, after extraction.
\end{itemize}

\textbf{Do NOT attend to the following fields (already handled during prefiltering):}
\begin{itemize}[leftmargin=*, itemsep=2pt]
    \item NCTId, Status, Completion Date, Has Publication
\end{itemize}

\textbf{Your instruction for each condition must enforce clean responses:}
\begin{itemize}
    \item If the required comparison for that condition is numeric (e.g., greater\_than, less\_than), the instruction should ask for a number: "Return a number only. Do not include units or explanations."
    \item If the required comparison for that condition is binary ('Yes'/'No' extraction) or requires matching a specific string/phrase, the instruction should reflect that: "Return 'Yes' or 'No' only. Do not explain your answer." or "Return the exact phrase that indicates presence, or 'None' if not found. Do not explain your answer."
\end{itemize}
\end{tcolorbox}

\begin{tcolorbox}[
    colback=blue!5,
    colframe=blue!75!black,
    title=\textbf{Prompt for Rule-to-Function Planning (Part 4)},
    fonttitle=\bfseries,
    sharp corners,
    boxrule=0.8pt
]
\textbf{Use the following domain expertise as guidance:}
\begin{itemize}[leftmargin=*, itemsep=2pt]
    \item Decompose complex rules into a list of simpler conditions. Each condition should correspond to a single LLM parser call and a single comparison check.
    \item Choose the logical\_operator:
    \begin{itemize}
        \item 'default': the default one, only one condition to be evaluated.
        \item 'sequential': Conditions are evaluated in order. If a condition evaluates to False, the overall result is True immediately, and subsequent conditions are not checked. This is useful for efficiency when some conditions are much stricter than others (e.g., check phase first, then enrollment within that phase).
      Sequential should always be used for exclusion, therefore, the 'llm\_instruction' should inquire the desired result for subsequential condition. E.g., if the exclusion is great than a threshold, the instruction should ask if it is less than that.
      Example: Exclude phase 4 trials that have enrollment greater than 30. The first instruction is to evaluate whether or not it is phase 4, the second condition should evaluate whether it is *less than* 30.
      Example 2: Exclude phase 2 trials that have enrollments less than 5000. The first instruction is to evaluate whether or not it is phase 2, the second condition should evaluate whether it is *greater than* 5000.
      \item Default to 'and' if there is only one condition to be checked.
    \end{itemize}
    \item Ensure the fields\_to\_attend and llm\_instruction for each condition are minimal and relevant only to that specific condition's required extraction. This allows the function generator to call llm\_parser specifically for the fields and information needed for each step, especially crucial for 'sequential'.
    \item Unless absolutely necessary (usually it is sequential rules), *prioritize single condition checking over multiple conditions (i.e., length(conditions)=1)*.
    \item Unless strictly specified by the rule, do not do literal matching, that is, *do not use quotation marks* for exact matching. Prioritize semantic matching.
    item When evaluating whether a trial reports safety outcomes, prioritize the adverse events section and use primary and secondary outcomes as complementary context. Trials often track safety endpoints through safety populations and AE summaries, even if not labeled as formal outcome measures.
    \item Unless you are absolutely certain or the rule is explicit (such as number of enrollment or phase), check more than one field for each condition as cross reference.
\end{itemize}
\end{tcolorbox}

\subsection{Agentic Drug List Retrieval}
To ensure the fidelity of FDA-approved drug lists used for trial filtering, we designed an agentic retrieval pipeline to collect the required information from validated public sources. The pipeline operates in two stages: local library lookup and web-based retrieval, with intelligent routing between them based on match quality.

The pipeline first examines the trial selection context to identify the disease name (e.g., from fields such as condition or indication). A retrieval agent powered by an LLM then checks a local drug library file that stores previously collected FDA-approved drug lists organized by disease. If a strong match is already available, the pipeline reuses that local list directly. This approach keeps the process fast and ensures consistency across runs, avoiding redundant web queries for previously encountered diseases.

If no strong match exists in the local library, the retrieval agent routes the task to a web searching agent capable of conducting elementary web parsing actions (search, view results, download/parse contents, etc.) to collect the drug list from Drugs.com. The pipeline begins with the Drugs.com search page, enters the disease query, and reviews the returned condition links. It then selects the most relevant condition medication page using fixed matching rules that compare disease terms in the query against words found in the parsed page content.

From the medication table listed on the matched page, the parser extracts medication names row by row. It retains names listed as approved on the page and removes rows marked as off-label. For each medication, it collects the display name, generic name, and brand names, then cleans spacing and removes duplicates. The result is stored as a disease-specific drug list in the local library, ensuring that future runs can reuse it without repeating web collection. This caching mechanism improves both efficiency and consistency across multiple queries for the same disease.

\subsection{Deterministic Execution with LLM-Assisted Parsing}
The execution module instantiates the structured function plan as executable code and applies it to the target data sources. This module is designed to maximize determinism while leveraging the LLM's parsing capabilities in a tightly controlled manner. The architecture separates concerns: the LLM handles only unstructured-to-structured data conversion, while all logical operations, comparisons, and analytic computations are performed deterministically outside the LLM.

For each rule, a function generator produces a concise, self-contained Python function by binding the specified parameters and operators from the plan. These functions are executed via Python's \texttt{exec}, guaranteeing consistent behavior across runs. The generated functions follow a standardized template that first queries only the relevant metadata fields, then invokes the LLM parser with format-constrained instructions, and finally applies deterministic comparison operations to evaluate the condition.

Data parsing is performed by the LLM using the minimal, format-constrained prompts generated during planning. The LLM's role is strictly restricted to converting unstructured text into predefined type, such as booleans (Yes/No), numeric values, or controlled vocabularies, with explicit output format constraints enforced in the instruction. Parsed outputs are validated against expected types and formats; if a field fails validation, the system triggers deterministic fallbacks (such as rule-based string matching or default values) rather than open-ended re-reasoning. This design minimizes the risk of hallucination and ensures that LLM parsing errors do not propagate to downstream logic.

Execution then proceeds through deterministic operations tailored to the analytic task. For landscape-level retrieval, the system applies the compiled filtering functions sequentially to each trial in the dataset and produces structured summaries of eligible trials, including counts, metadata aggregations, and optional visualizations. For eligibility-weighted meta-analysis, the system constructs eligibility tables for selected studies, computes similarity-based weights in the specified feature space using predefined distance metrics, and incorporates these weights into meta-analytic estimates using established statistical routines (e.g., weighted random-effects models). All statistical computations are performed deterministically using standard libraries, ensuring reproducibility and transparency.

\section{Eligibility-Aware Meta-Analysis Implementation Details}
\subsection{Penalty Score Calculation}
To quantify eligibility alignment across trials, we developed a structured framework that converts free-text eligibility criteria into machine-comparable representations and evaluates mismatches using clinically informed penalty scores. The framework consists of three stages: criteria structuring, rule generation with severity assignment, and deterministic penalty computation.

Free-text eligibility criteria were converted into a structured representation using a constrained LLM parsing pipeline. Each eligibility section was decomposed into sentence-level units and mapped into a standardized schema. Specifically, each criterion was represented as a structured tuple:
\[
\texttt{(Type, Entity, Attribute, Value, Condition, Sentence)},
\]
where \textit{Type} denotes inclusion or exclusion; \textit{Entity} corresponds to a clinical category (e.g., demographic characteristics, biomarkers, prior treatment); \textit{Attribute} specifies the variable of interest; \textit{Value} encodes the constraint; \textit{Condition} captures logical qualifiers; and \textit{Sentence} retains the original text for traceability.

Given a target trial and a set of comparator trials, mismatch between trials was quantified using a rule-based penalty framework guided by LLM-derived clinical logic. The LLM first identified clinically meaningful rules for checking discrepancies, producing a set $R$ of comparison rules. Each rule $r \in R$ specifies which clinical entity and attribute to compare, what type of discrepancy to detect (e.g., stricter inclusion, conflicting exclusion, missing constraint), and how severe that mismatch is for the target clinical question. 

Each rule is associated with a severity score $s_r$ ranging from 0 (no discrepancy) to 1.0 (major population incompatibility). Severity scores were assigned by the LLM based on its evaluation of the clinical importance of each mismatch, considering factors such as the type of eligibility constraint, its impact on patient population selection, and its potential influence on trial outcomes. For example, a mismatch in age range might receive a lower severity score ($s_r = 0.2$) if the ranges largely overlap, whereas a conflict in required biomarker status (e.g., HER2-positive vs. HER2-negative) would receive a high severity score ($s_r = 0.9$) due to fundamental differences in target populations.

Each rule was applied to the structured criteria by extracting relevant values for the specified clinical entity and comparing them with the target trial using predefined operators (e.g., $\geq$, $=$, $\in$) via the same function-planning and deterministic-execution approach described in Supplement Section A. A discrepancy was recorded when the comparison condition was violated, triggering assignment of the rule-specific severity score. The penalty for rule $r$ applied to a given trial is defined as:
\[
\text{Penalty}_{r}(trial) =
\begin{cases}
s_r, & \text{if mismatch is present} \\
0, & \text{otherwise},
\end{cases}.
\]

This procedure enabled consistent and reproducible identification of clinically meaningful differences across trials. The overall mismatch score for each trial was computed as the sum of penalties across all rules: 
\[
\text{Total Penalty}(trial) = \sum_{r=1}^{R} \text{Penalty}_{r}(trial).
\]
Higher total penalty scores indicate greater divergence from the target trial's eligibility criteria, and these scores are subsequently transformed into similarity weights for meta-analytic pooling (see Section B.2).

\subsection{Transforming Penalties into Meta-Analytic Weights}

To incorporate eligibility alignment into meta-analytic pooling, penalty scores must be transformed into normalized weights that quantify the relative contribution of each study. This transformation should satisfy several desirable properties: (1) studies with perfect alignment (zero penalty) receive maximum weight; (2) studies with severe mismatches retain non-zero influence to avoid complete exclusion; (3) the decay from maximum to minimum weight is smooth and tunable; and (4) weights sum to one for direct integration into standard meta-analytic estimators.

Let $p_i \ge 0$ denote the total penalty score for study $i$ (as computed in Section B.1), and let $k$ denote the total number of comparator studies. We define $P_{\max}$ as the maximum observed penalty across all comparator studies in the current analysis. This represents the worst-case eligibility mismatch present in the available evidence base. To shape the penalty-to-weight mapping, we introduce two tuning constants: a baseline floor $B \in (0,100)$, which guarantees that every study retains nonzero influence even when maximally misaligned, and a steepness parameter $\gamma>0$, which controls how quickly influence decays as the penalty grows. Higher values of $\gamma$ produce steeper decay, amplifying the distinction between well-aligned and poorly-aligned studies.

Because raw penalties depend on the rule scale and the number of rules generated, we first define a normalized compatibility score that maps penalties to the unit interval. The normalized score is given by:
\[
f_i \;=\; \frac{e^{-\gamma p_i} - e^{-\gamma P_{\max}}}{\,1 - e^{-\gamma P_{\max}}\,}, \qquad 0 \le f_i \le 1.
\]
This exponential transformation ensures that $f_i = 1$ when $p_i = 0$ (perfect alignment) and $f_i = 0$ when $p_i = P_{\max}$ (worst observed mismatch). The exponential decay provides a smooth, differentiable mapping that is sensitive to penalty differences while remaining robust to outliers.

Next, we place this normalized score on a percentage-like range with a guaranteed baseline floor:
\[
S_i \;=\; B + (100 - B)\,f_i. 
\]
Under this transformation, a perfectly matched study ($p_i = 0$) receives an influence score of $S_i = 100$, while the worst-case study ($p_i = P_{\max}$) retains a minimum influence of $S_i = B$. The baseline parameter $B$ prevents complete down-weighting of any study, acknowledging that even misaligned trials may contribute valuable information, particularly when evidence is sparse.

Finally, eligibility-based weights are obtained by normalizing the influence scores across all $k$ studies: 
\[
w_i \;=\; \frac{S_i}{\sum_{j=1}^k S_j}\,.
\]

These weights satisfy $\sum_{i=1}^k w_i = 1$ and can be directly incorporated into standard meta-analytic estimators (e.g., weighted means, random-effects models) to produce eligibility-adjusted treatment effect estimates. Studies with lower penalty scores receive proportionally higher weights, ensuring that the pooled estimate is more heavily influenced by trials with eligibility criteria closely aligned to the target population.

\subsection{Weighted Mantel-Haenszel Estimator}
To incorporate eligibility-based weights into meta-analytic pooling, we extend the classical Mantel-Haenszel (MH) estimator for risk ratios by integrating the normalized weights $w_i$ derived in Section B.2. This section presents the eligibility-weighted Mantel-Haenszel (EW-MH) estimator, establishes its theoretical properties, and provides variance estimation for inference.

For each trial $i=1,\dots,k$, we observe a $2\times 2$ contingency table:
\[
\begin{array}{c|cc|c}
      & \text{Event} & \text{No event} & \text{Total}\\ \hline
\text{Treatment} & a_i & b_i & n_{1i}\\
\text{Control}   & c_i & d_i & n_{0i}\\ \hline
\text{Total}     & m_i & n_i-m_i & n_i:=n_{1i}+n_{0i}
\end{array}
\]
where $a_i$ and $c_i$ denote the number of events in the treatment and control arms, respectively, and $n_{1i}$ and $n_{0i}$ are the corresponding sample sizes.

We assume independent sampling within each trial:
\[
a_i\sim\mathrm{Bin}(n_{1i},p_{1i}), \qquad
c_i\sim\mathrm{Bin}(n_{0i},p_{0i}),
\]
and a common \emph{true} risk ratio across trials (fixed-effect assumption): 
\[
p_{1i}=\theta\,p_{0i}, \qquad \theta>0 \text{ (constant across }i).
\]
We further assume that trial sizes grow at comparable rates, ensuring that no single trial dominates asymptotic behavior. Let $w_i>0$ denote the non-stochastic eligibility weight for trial $i$ as computed in Section B.2. Importantly, $w_i$ is based solely on eligibility criteria alignment and is therefore independent of the observed outcomes $a_i$ and $c_i$.

The EW-MH estimator is defined as:
\begin{equation*}
\widehat\theta_{\text{EW-MH}}
   =\frac{A_w}{C_w}, \qquad
   A_w=\sum_{i=1}^k w_i\frac{a_i n_{0i}}{n_i},\;
   C_w=\sum_{i=1}^k w_i\frac{c_i n_{1i}}{n_i}.
\end{equation*}
Setting $w_i\equiv1/k$ (equal weights) recovers the classical MH estimator $\widehat\theta_{\text{MH}}$. The EW-MH estimator up-weights trials with eligibility criteria closely aligned to the target population, while down-weighting trials with substantial mismatches, thereby producing a pooled estimate that is more representative of the target clinical context.

We now establish the large-sample unbiasedness and variance of the EW-MH estimator.

\begin{proposition}[First-order unbiasedness]\label{prop:unbiased}
Under the fixed-effect model described above,
\[
\mathbb{E}\!\bigl[\widehat\theta_{\emph{EW-MH}}\bigr]
   \;=\;
   \theta + O\!\bigl(n_{\min}^{-1}\bigr),
\]
where $n_{\min}=\min_i n_i$.  Hence
$\widehat\theta_{\emph{EW-MH}}\xrightarrow{p}\theta$ as all $n_i\to\infty$.
\end{proposition}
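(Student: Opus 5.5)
The plan is a ratio-of-means argument: a delta-method expansion of $A_w/C_w$ around the ratio of the expectations of numerator and denominator.

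\textbf{Step 1: the expectations are exactly proportional.} Since the weights $w_i$ are non-stochastic, linearity and the binomial means give
\[
\mathbb{E}[A_w]=\sum_i w_i\frac{n_{1i}p_{1i}n_{0i}}{n_i}=\theta\sum_i w_i\frac{n_{1i}n_{0i}p_{0i}}{n_i}=\theta\,\mathbb{E}[C_w],
\]
using $p_{1i}=\theta p_{0i}$. So $\theta=\mu_A/\mu_C$ exactly, where $\mu_A=\mathbb{E}[A_w]$ and $\mu_C=\mathbb{E}[C_w]$. This is the only place the fixed-effect assumption enters.

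\textbf{Step 2: second-order expansion.} Write $A_w=\mu_A+\delta_A$ and $C_w=\mu_C+\delta_C$, and expand
\[
\frac{A_w}{C_w}=\frac{\mu_A}{\mu_C}\Bigl(1+\frac{\delta_A}{\mu_A}\Bigr)\Bigl(1-\frac{\delta_C}{\mu_C}+\frac{\delta_C^2}{\mu_C^2}-\cdots\Bigr).
\]
Taking expectations, the first-order terms vanish. This leaves
\[
\theta\Bigl(\frac{\mathrm{Var}(C_w)}{\mu_C^2}-\frac{\mathrm{Cov}(A_w,C_w)}{\mu_A\mu_C}\Bigr)+R,
\]
where $R$ is the remainder. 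The covariance is zero because $a_i$ and $c_i$ come from independent arms and the trials are independent. The variance is
\[
\mathrm{Var}(C_w)=\sum_i w_i^2\frac{n_{1i}^2}{n_i^2}\,n_{0i}p_{0i}(1-p_{0i}).
\]
This is $O(\sum_i w_i^2 n_i)$, whereas $\mu_C^2$ is of order $(\sum_i w_i n_i)^2$. Under comparable growth rates, fixed $k$, $p_{0i}$ bounded away from $0$, and fixed positive $w_i$, the ratio is $O(n_{\min}^{-1})$.

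\textbf{Step 3: control the remainder $R$ (main obstacle).} Strictly, $\mathbb{E}[A_w/C_w]$ is undefined, or infinite, because $C_w=0$ with positive probability, so some convention is needed. I would first try the standard one: interpret the expectation as that of the second-order delta-method approximation, or condition on the event $\{C_w\ge \mu_C/2\}$. The complement of that event has probability $e^{-c\,n_{\min}}$ by Hoeffding/Chernoff bounds on the independent binomials. On the good event, expand with Taylor's theorem using a Lagrange remainder bounded by $|\delta_C|^3/\mu_C^3$ plus $|\delta_A|\delta_C^2/(\mu_A\mu_C^2)$. Binomial central moments give $\mathbb{E}|\delta|^3=O(n^{3/2})$, so these terms are $O(n_{\min}^{-3/2})$. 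On the bad event, a bounded modification of the estimator, e.g.\ a continuity correction or the convention above, contributes only $O(e^{-cn_{\min}})$. Altogether $\mathbb{E}[\widehat\theta_{\text{EW-MH}}]=\theta+O(n_{\min}^{-1})$.

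\textbf{Step 4: consistency.} This follows more directly. Chebyshev gives $A_w/\mu_C\to\theta$ and $C_w/\mu_C\to 1$ in probability, since each variance over $\mu_C^2$ is $O(n_{\min}^{-1})$. The continuous mapping theorem then gives $\widehat\theta_{\text{EW-MH}}\xrightarrow{p}\theta$. This route does not need the delicate expectation bound of Step 3.
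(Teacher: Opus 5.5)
Your proposal follows the paper's proof: the exact identity $\mathbb{E}[A_w]=\theta\,\mathbb{E}[C_w]$, which comes from the binomial means and $p_{1i}=\theta p_{0i}$, and then a second-order expansion of $A_w/C_w$ in which $\operatorname{Cov}(A_w,C_w)=0$ and the leading bias is $\operatorname{Var}(C_w)/\mathbb{E}[C_w]^2=O(n_{\min}^{-1})$ under comparable trial sizes. Your version is more careful than the paper's, which ignores the Taylor remainder, never notes that $\mathbb{E}[A_w/C_w]$ is infinite or undefined because $C_w=0$ with positive probability, and infers consistency from the vanishing bias alone; your truncation to $\{C_w\ge\mu_C/2\}$ with a Chernoff bound, and your Chebyshev-plus-continuous-mapping argument, supply exactly those missing pieces.
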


\begin{proof}
We take expectations of the numerator and denominator of the estimator separately.  Because $a_i$ and $c_i$ are binomial random variables,
\[
\mathbb{E}[a_i]=n_{1i}p_{1i}=n_{1i}\theta p_{0i},\qquad
\mathbb{E}[c_i]=n_{0i}p_{0i}.
\]
Hence
\[
\begin{aligned}
\mathbb{E}[A_w]
   &=\sum_{i} w_i\frac{n_{0i}}{n_i}\,
                n_{1i}\theta p_{0i}
     =\theta
       \sum_{i} w_i\frac{n_{0i}n_{1i}}{n_i}\,p_{0i},\\[4pt]
\mathbb{E}[C_w]
   &=\sum_{i} w_i\frac{n_{1i}}{n_i}\,
                n_{0i}p_{0i}
     =
       \sum_{i} w_i\frac{n_{0i}n_{1i}}{n_i}\,p_{0i}.
\end{aligned}
\]
Therefore $\mathbb{E}[A_w]=\theta\,\mathbb{E}[C_w]$.

Let $X:=A_w$ and $Y:=C_w$. A second-order Taylor expansion of $X/Y$ at $(\mathbb{E}[X],\mathbb{E}[Y])$, along with $\operatorname{Cov}(A_w,C_w)=0$ by trial independence, gives
\[
\mathbb{E}\!\Bigl[\frac{X}{Y}\Bigr]
   =\frac{\mathbb{E}[X]}{\mathbb{E}[Y]}
    +O\!\left(
         \frac{\operatorname{Var}(Y)}{\mathbb{E}[Y]^2}
       \right)
   =\theta + O\!\bigl(n_{\min}^{-1}\bigr),
\]
since $\operatorname{Var}(Y)=O(n_{\max})$ while $\mathbb{E}[Y]=\Omega(n_{\min})$. Hence the bias diminishes as
$1/n_{\min}$ and the estimator is consistent.
\end{proof}
\begin{proposition}[Large-sample variance] \label{prop:variance}
Let $\hat\ell=\ln\widehat\theta_{\emph{EW-MH}}$ be the
log estimate from the eligibility–weighted MH estimator. Then,
\[
\operatorname{Var}(\hat\ell)
   \;=\;
   \frac{\displaystyle
         \sum_{i=1}^k
         w_i^{2}\,
         \dfrac{(a_i+d_i)\,b_i c_i}{n_i^{2}}
       }
       {2\,A_w\,C_w}
   \;+\;O_p\!\bigl(n_{\min}^{-1}\bigr).
\] 
If $w_i\equiv1/k$ the expression reduces to the classical
Robins–Breslow–Greenland (RBG) variance \citep{robins1986general} for
$\ln\widehat\theta_{\text{MH}}$.
\end{proposition}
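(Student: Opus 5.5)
The plan is a delta-method argument on the log scale, followed by a plug-in (Robins–Breslow–Greenland-type) estimator of the resulting variance. The key structural fact is that the weights $w_i$ are non-stochastic and enter linearly. As a result, every table-level quantity in the classical derivation \citep{robins1986general} is simply multiplied by $w_i$ (in $A_w$, $C_w$) or by $w_i^2$ (in variance contributions). I would therefore follow the classical proof line by line and track where the weights land.

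\emph{Step 1 (linearization).} Write $\hat\ell-\ln\theta=\ln A_w-\ln(\theta C_w)$. A first-order expansion around $\mathbb{E}[A_w]=\theta\,\mathbb{E}[C_w]$, which was established in the proof of Proposition~\ref{prop:unbiased}, gives
\[
\hat\ell-\ln\theta=\frac{A_w-\theta C_w}{\theta\,\mathbb{E}[C_w]}+O_p\!\bigl(n_{\min}^{-1}\bigr).
\]
Here the remainder is controlled exactly as in Proposition~\ref{prop:unbiased}, using $\operatorname{Var}(C_w)=O(n_{\max})$, $\mathbb{E}[C_w]=\Omega(n_{\min})$, and comparable trial sizes. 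Set $U_i:=(a_i n_{0i}-\theta c_i n_{1i})/n_i$. Then $A_w-\theta C_w=\sum_i w_iU_i$ with $\mathbb{E}[U_i]=0$, and the $U_i$ are independent across trials, so
\[
\operatorname{Var}(\hat\ell)\approx\frac{\sum_i w_i^2\operatorname{Var}(U_i)}{\theta^2\,\mathbb{E}[C_w]^2}.
\]

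\emph{Step 2 (estimating the pieces).} Replace $\theta\,\mathbb{E}[C_w]$ and $\mathbb{E}[C_w]$ by their consistent estimators $A_w$ and $C_w$. This turns the denominator $\theta^2\mathbb{E}[C_w]^2$ into a product of $A_w$ and $C_w$, with a relative error of $O_p(n_{\min}^{-1/2})$ that is absorbed into the remainder. For the numerator, use the binomial variances $\operatorname{Var}(a_i)=n_{1i}p_{1i}(1-p_{1i})$ and $\operatorname{Var}(c_i)=n_{0i}p_{0i}(1-p_{0i})$. Then substitute a per-table estimator of $\operatorname{Var}(U_i)$ built from the observed cells $a_i,b_i,c_i,d_i$. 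This estimator is the same one used in the RBG derivation: a symmetric combination of cell products, in which the terms pair the $a_i/d_i$ contributions with $b_i c_i$, and the averaging of the two symmetric representations produces the factor $2$ in the denominator.

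Because $w_i$ does not depend on the data, each summand of the RBG numerator acquires a factor $w_i^2$, while $A_w$ and $C_w$ carry $w_i$. This yields the displayed expression. The reduction to the classical case is then immediate: setting $w_i\equiv1/k$ multiplies the numerator by $k^{-2}$ and also multiplies the product $A_wC_w$ by $k^{-2}$, so the factors cancel and the unweighted RBG formula is recovered.

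\emph{Main obstacle.} The delicate step is Step 2: verifying that the specific cell combination $(a_i+d_i)b_ic_i/n_i^2$, divided by $2A_wC_w$, is a consistent estimator of $\operatorname{Var}(U_i)$ up to the $\theta^2\mathbb{E}[C_w]^2$ normalization under the risk-ratio model $p_{1i}=\theta p_{0i}$. This is the one place where the algebra is not automatic. The cited RBG form is naturally tied to a specific parametrization, so I would first compute $\mathbb{E}[(a_i+d_i)b_ic_i]/n_i^2$ under the binomial model to leading order and compare it with $\operatorname{Var}(U_i)$. If the two match only after symmetrizing, I would average the treatment- and control-side representations, which again accounts for the factor $2$. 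If they do not match exactly, I would state the weighted Greenland–Robins risk-ratio numerator $\sum_i w_i^2\,(m_in_{1i}n_{0i}-a_ic_in_i)/n_i^2$ as the correct leading term.

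Finally, I would check that all replacement errors are $O_p(n_{\min}^{-1})$ relative to the $O(n^{-1})$ leading variance. This follows from the bounded-ratio assumption on trial sizes and a second application of the Taylor bound used in Proposition~\ref{prop:unbiased}.
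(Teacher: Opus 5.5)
Your Step~1 is sound and matches the paper's route. The paper applies the delta method to $\ln A_w-\ln C_w$ and adds $\operatorname{Var}(A_w)/A_w^2$ and $\operatorname{Var}(C_w)/C_w^2$, which is equivalent to your single linear term $\sum_i w_iU_i/(\theta\,\mathbb{E}[C_w])$. The gap is the step you flag, and it cannot be closed in favour of the displayed formula.

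Tracking $\theta$, one has $\theta^2\mathbb{E}[C_w]^2\approx\theta A_wC_w$, not $A_wC_w$. So the per-table quantity to estimate is $\operatorname{Var}(U_i)/\theta$ for your $U_i$. Under $p_{1i}=\theta p_{0i}$ it and its plug-in are
\[
\frac{n_{0i}n_{1i}}{n_i^2}\bigl\{n_{0i}p_{0i}(1-p_{1i})+n_{1i}p_{1i}(1-p_{0i})\bigr\},
\qquad\text{with plug-in}\qquad
\frac{n_{1i}a_id_i+n_{0i}b_ic_i}{n_i^2}=\frac{m_in_{1i}n_{0i}-a_ic_in_i}{n_i^2}.
\]
So the consistent leading term is your fallback, $\sum_i w_i^2(m_in_{1i}n_{0i}-a_ic_in_i)/n_i^2$ divided by $A_wC_w$, with no factor $2$.

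No averaging of representations turns this into $(a_i+d_i)b_ic_i/(2n_i^2)$. For balanced null tables ($n_{1i}=n_{0i}$, $p_{1i}=p_{0i}$) the displayed expression converges to one quarter of the correct one. Its relative error is therefore $O(1)$, not $O_p(n_{\min}^{-1/2})$.

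The combination $(a_i+d_i)b_ic_i/n_i^2$ is the $P_iS_i$ piece of the cross term in the RBG variance for the MH \emph{odds} ratio. Hence the case $w_i\equiv1/k$ is neither RBG nor a risk-ratio variance. Your $k^{-2}$ cancellation is correct, but it recovers the wrong formula.

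The paper's own table gives a quick check. With equal weights the displayed formula gives roughly $(2.00,\,2.38)$ around $2.18$. The Greenland--Robins form gives about $(1.70,\,2.80)$, essentially the reported $(1.71,\,2.79)$.

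The paper's proof has the same hole. It asserts, via ``standard binomial plug-in approximations'', that the summed delta-method terms equal the displayed ratio times $\{1+O_p(n_{\min}^{-1/2})\}$, which the computation above contradicts. The reported weighted interval $(1.76,\,2.20)$ is correspondingly too narrow; with the paper's weights the corrected formula gives about $(1.48,\,2.60)$.

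The proposition as literally written survives only because its remainder $O_p(n_{\min}^{-1})$ is of the same order as $\operatorname{Var}(\hat\ell)$ itself under comparable trial sizes. It would stay true with the leading term replaced by $0$, so no delta-method argument is needed for it and none can give it content.

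Rather than trying to justify the factor $2$, state and prove the weighted Greenland--Robins variance. Read it as an asymptotic variance, since $C_w=0$ has positive probability. Your Step~1 plus the plug-in above is a complete argument for it, with relative error $O_p(n_{\min}^{-1/2})$.
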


\begin{proof}

For trial $i$, define
\[
T_i=\frac{a_i n_{0i}}{n_i},\qquad
U_i=\frac{c_i n_{1i}}{n_i}.
\]
Then
\[
A_w=\sum_{i=1}^{k} w_i T_i,\qquad
C_w=\sum_{i=1}^{k} w_i U_i,
\]
and 
\[
\operatorname{Var}(A_w)=\sum_{i} w_i^{2}\operatorname{Var}(T_i),\qquad
\operatorname{Var}(C_w)=\sum_{i} w_i^{2}\operatorname{Var}(U_i).
\]
Applying the multivariate delta method to
$\hat\ell=\ln A_w-\ln C_w$ gives
\begin{equation}\label{eq:delta}
\operatorname{Var}(\hat\ell)
   \;=\;
   \frac{\operatorname{Var}(A_w)}{A_w^{2}}
  +\frac{\operatorname{Var}(C_w)}{C_w^{2}}
  -\frac{2\,\operatorname{Cov}(A_w,C_w)}{A_wC_w}.
\end{equation}

With $\operatorname{Var}(a_i)=n_{1i}p_{1i}(1-p_{1i})$ and $\operatorname{Var}(c_i)=n_{0i}p_{0i}(1-p_{0i})$,
\[
\operatorname{Var}(T_i)=\Bigl(\frac{n_{0i}}{n_i}\Bigr)^{2} n_{1i}p_{1i}(1-p_{1i}),\qquad
\operatorname{Var}(U_i)=\Bigl(\frac{n_{1i}}{n_i}\Bigr)^{2} n_{0i}p_{0i}(1-p_{0i}).
\]
Note that
\[
\frac{\operatorname{Var}(T_i)}{A_w^2}+\frac{\operatorname{Var}(U_i)}{C_w^2}
=\frac{1}{A_w C_w}\!\left\{\operatorname{Var}(T_i)\frac{C_w}{A_w}+\operatorname{Var}(U_i)\frac{A_w}{C_w}\right\}.
\]
By standard binomial plug-in approximations ($a_i/n_{1i}-p_{1i}=O_p(n_{1i}^{-1/2})$, etc.) and
$A_w/C_w=\theta\{1+O_p(n_{\min}^{-1/2})\}$, summing over $i$ and using $p_{1i}=\theta p_{0i}$ yields
\[
\sum_{i=1}^k w_i^2\!\left(\frac{\operatorname{Var}(T_i)}{A_w^2}+\frac{\operatorname{Var}(U_i)}{C_w^2}\right)
=\frac{\displaystyle\sum_{i=1}^k w_i^2\,\dfrac{(a_i+d_i)\,b_i c_i}{n_i^{2}}}{2\,A_w C_w}\,
\{1+O_p(n_{\min}^{-1/2})\}.
\]
Substituting this into \eqref{eq:delta} and using $\operatorname{Cov}(A_w,C_w)=0$ gives
\[
\operatorname{Var}(\hat\ell)
=\frac{\displaystyle\sum_{i=1}^k w_i^{2}\,\dfrac{(a_i+d_i)\,b_i c_i}{n_i^{2}}}{2\,A_w C_w}
\;+\;O_p(n_{\min}^{-1}).
\]
 Setting $w_i\equiv1/k$ recovers the classical RBG variance with
$A=\sum_i a_i n_{0i}/n_i$ and $C=\sum_i c_i n_{1i}/n_i$.
\end{proof}

\paragraph{Confidence Interval Construction.}

It follows from Proposition~\ref{prop:variance} that an asymptotic 95\% confidence interval for $\theta$ is given by:
\[
\text{CI}_{0.95}
  =\exp\!\Bigl(
          \hat\ell
          \pm1.96\sqrt{\operatorname{Var}(\hat\ell)}
        \Bigr),
\]
where $\widehat{\operatorname{Var}}(\hat\ell)$ is obtained by plugging observed counts into the variance formula from Proposition~\ref{prop:variance}.

\paragraph{Scale invariance.}
Because the factor $w_i$ appears multiplicatively in both $A_w$ and
$C_w$, any common multiplier cancels in the ratio $\widehat\theta_{\text{EW-MH}}=A_w/C_w$. Therefore, rescaling the eligibility weights by a constant factor leaves both $\widehat\theta_{\text{EW-MH}}$ and its variance unchanged. This property ensures that the choice of weight normalization (e.g., summing to 1 versus summing to $k$) does not affect inference.

\section{Agentic Trial Filtering for Gastric Cancer Landscape Analysis}
This section provides detailed implementation specifics for the gastric cancer landscape analysis example presented in Section 3.1.  
\begin{tcolorbox}[colback=white,colframe=blue,title=Input Query (lightly copyedited for readability)]
This study aims to identify and evaluate clinical trials on gastric cancer or gastroesophageal junction cancer. Trials must investigate targeted therapies or immunotherapies. Trials should report survival outcomes such as progression-free survival (PFS) or overall survival (OS), and enroll biomarker-stratified populations (including but not limited to HER2-positive, MSI-H, and PD-L1-positive populations). Exclude Phase III trials with small enrollment, for example, fewer than 100 patients. Only include trials investigating FDA-approved drugs.
\end{tcolorbox}

EligMeta generated six inclusion and exclusion rules to operationalize the requirements from the query:

\begin{enumerate}[label=(\roman*)]
\item Include trials that study gastric cancer or gastroesophageal junction cancer
\item Include trials that investigate targeted therapies or immunotherapies
\item Include trials that report survival outcomes such as progression-free survival (PFS) or overall survival (OS)
\item Include trials that enroll biomarker-stratified populations, including but not limited to HER2-positive, MSI-H, or PD-L1-positive
\item Exclude Phase III trials with fewer than 100 enrolled patients
\item Include only trials where the drugs under investigation are FDA-approved
\end{enumerate}

The generated rule set is surfaced for review to allow adjustments by domain experts before any filtering occurs. This human-in-the-loop step ensures that the system's interpretation aligns with the analyst's intent and domain knowledge.

Each rule is then converted into a structured function plan following the methodology described in Supplementary Section A.2. The plan specifies which metadata fields to query, what information the LLM should extract, how the extracted values should be compared, and what logical operators should govern multi-condition evaluation. The system also prepares minimal parsing prompts with explicit type and format constraints (e.g., number, short string, or boolean) to ensure deterministic and reproducible LLM outputs.

As illustrative examples, the plans for rules (v) and (vi) are shown below:

\begin{verbatim}
{
  "filter_name": "exclude_phase_iii_fewer_than_100_enrollment",
  "logical_operator": "sequential",
  "conditions": [
    {
      "fields_to_attend": ["Phase"],
      "llm_instruction": "Check if the trial is in Phase III.
      Return `Yes' if it is, otherwise return `No'. Do not explain your answer.",
      "comparison": "equal_to",
      "target_value": "Yes"
    },
    {
      "fields_to_attend": ["Enrollment"],
      "llm_instruction": "Extract the number of enrolled patients. 
      Return a number only. Do not include units or explanations.",
      "comparison": "greater_than",
      "target_value": 100
    }
  ]
}
\end{verbatim}

\begin{verbatim}
{
  "filter_name": "fda_approved_drugs_only",
  "logical_operator": "default",
  "conditions": [
    {
      "fields_to_attend": ["Interventions"],
      "llm_instruction": "Extract and return the names of drugs under 
      investigation as a Python list. Do not include any other information.",
      "comparison": "in_list",
      "membership_list_name": "FDA_approved_drugs_gastric"
    }
  ]
}
\end{verbatim}

The first plan implements a sequential evaluation strategy: the generated filtering function first instructs the LLM to identify Phase III status from the trial's phase metadata. If the trial is Phase III, the function then extracts the enrollment count and applies a deterministic comparison ($> 100$) to determine retention. The \texttt{sequential} logical operator enables early termination—non-Phase III trials skip the enrollment check entirely, improving computational efficiency.

The second plan demonstrates the separation of LLM parsing from external knowledge validation. The function extracts the list of drugs under investigation from the trial's intervention metadata using the LLM, then performs a deterministic membership check against a curated list of FDA-approved drugs for gastric cancer (retrieved via the agentic drug retrieval pipeline described in Supplementary Section A.3). Critically, the LLM is responsible only for extracting drug names from unstructured text—it does not determine FDA approval status. This architectural separation minimizes the risk of hallucination by confining the LLM's role to parsing tasks where ground truth can be validated, while delegating factual knowledge checks to deterministic external sources.

After executing all six filtering functions on the retrieved trial dataset, the system produces a structured summary of eligible trials, including trial counts, metadata aggregations (e.g., distribution of biomarker populations, phase breakdown), and optional visualizations. This output enables rapid landscape assessment and evidence synthesis for the specified clinical question.

\section{Baseline Comparisons: GPT-5.4 and Agentic Coding}
\subsection{Implementation Details}
We conducted two parallel landscape analyses to benchmark against EligMeta, each representing a distinct paradigm of AI-assisted clinical evidence synthesis.

First, we employed GPT-5.4 with the Deep Research feature available through the OpenAI platform, which reflects one of the most accessible AI tools for clinicians. Deep Research extends standard LLM capabilities by enabling multi-step information retrieval, iterative reasoning, and synthesis across heterogeneous web sources. It autonomously decomposes complex clinical queries, performs sequential searches, evaluates evidence relevance, and integrates findings into structured summaries, thereby approximating a human-like literature review workflow while maintaining ease of use through a natural language interface. 

Second, we utilized the OpenAI Codex client to represent a fully agentic coding paradigm with maximal flexibility and automation. Codex enables end-to-end pipeline construction by generating executable scripts that orchestrate data retrieval, parsing, filtering, and structuring steps. Unlike interface-based systems, this approach allows dynamic control over intermediate representations, custom logic implementation, and reproducible execution, effectively simulating a programmable research assistant capable of reconstructing complex analytical workflows.

For both approaches, identical prompts derived from the gastric cancer use case were provided to ensure consistency in task specification. Additionally, we included an explicit formatting instruction requiring structured tabular outputs. This step ensures comparability across methods, as EligMeta natively integrates this output format within its final result-generation module. The additional formatting prompt is presented below:
\begin{tcolorbox}[
    colback=blue!5,
    colframe=blue!75!black,
    title={Landscape Formatting Prompt},
    fonttitle=\bfseries,
    boxrule=0.6pt,
    arc=2pt,
    left=6pt,
    right=6pt,
    top=6pt,
    bottom=6pt
]
Organize the results and return an Excel file with the following columns:
\begin{enumerate}
    \item NCT Number
    \item Study Title
    \item Intervention(s)
    \item Target/Biomarker
    \item Indication/Condition
    \item Study Phase
    \item Enrollment Size
    \item Status
    \item Trial Summary
    \item Endpoints: Overall Survival
    \item Endpoints: Progression-Free Survival
\end{enumerate}
\end{tcolorbox}

\subsection{Results from GPT-5.4 and Codex}
Given the query without further manual intervention, both systems autonomously executed landscape analyses. We present the workflow and outputs for each approach, followed by a comparative assessment of coverage, accuracy, and adherence to predefined eligibility criteria.

Codex generated a reproducible retrieval and filtering pipeline in Python using the ClinicalTrials.gov API as the primary data source.
The workflow first identified completed interventional studies related to gastric cancer or gastroesophageal junction cancer, and then applied sequential eligibility criteria to retain trials evaluating FDA-approved targeted therapies or immunotherapies in biomarker-stratified populations, including HER2-positive, PD-L1-positive, MSI-H/dMMR, and CLDN18.2-positive disease. Studies were excluded if they were phase IV, had missing phase information, lacked survival endpoints of interest such as overall survival or progression-free survival, or, in the case of phase III trials, enrolled fewer than 100 participants. For studies meeting these structural criteria, survival data were extracted from posted ClinicalTrials.gov results when available and supplemented, when necessary, by ClinicalTrials.gov-linked publications. Filtering steps were conducted via wording checks with regular expression package for natural language processing tasks.

Codex identified 28 gastric and gastroesophageal junction biomarker-stratified trials. Its output was heavily weighted toward HER2-positive studies, which accounted for 21 of the 28 entries, and it included a large number of phase 2 and other exploratory studies. Only 7 of the 28 studies were marked as included in NCCN references. Codex also made several notable selection errors. Most importantly, it missed many pivotal trials, which were among the most practice-relevant studies in the current gastric/GEJ biomarker-guided treatment landscape. In addition, Codex accidentally included FLX475 trial (NCT04768686), which has not been approved by the FDA. These error patterns suggest that Codex performed better as a broad fundamental tool for constructing workflow under expertise's supervision than as a model for precise evidence curation.

GPT-5.4 generated a smaller and more selective evidence set, returning 11 trials with a stronger emphasis on late-stage, practice-relevant studies. Eight of the 11 entries were phase 3 trials, and 8 were marked as NCCN-relevant. The GPT-5.4 workbook also included dedicated Sources and Notes sheets, making its evidence trail more transparent and easier to verify. GPT-5.4 also had important limitations. Although it produced a cleaner and more practice-focused trial set, its narrower retrieval strategy reduced overall sensitivity and likely excluded some eligible or supportive studies captured by Codex. This suggests that GPT-5.4 performed better at prioritizing high-impact, guideline-aligned evidence than at generating a fully exhaustive landscape. In addition, the model also consumed significantly higher computation resources with approximately 1 hour running time with the highest reasoning effort on GPT-5.4 model.

Complete outputs generated by both baseline approaches and the pipeline generated by Codex are provided in the GitHub repository.

\section{Eligibility Criteria of Olaparib Clinical Trials}
\label{sec: eli_four}

This section provides the complete eligibility criteria for the four olaparib clinical trials analyzed in the eligibility-weighted meta-analysis presented in Section 3.2 of the main text. These criteria were extracted from ClinicalTrials.gov and serve as the basis for computing penalty scores and eligibility weights as described in Supplementary Material Section B. The full text of inclusion and exclusion criteria is reproduced below to enable transparent assessment of eligibility alignment across trials and to facilitate reproducibility of the meta-analytic weights. 

\subsection*{Golan 2019 (NCT02184195)}
\paragraph{Key Inclusion Criteria}
\begin{itemize}
  \item Histologically or cytologically confirmed pancreas adenocarcinoma receiving initial chemotherapy for metastatic disease and without evidence of disease progression on treatment.
  \item Patients with measurable disease and/or non-measurable or no evidence of disease assessed at baseline by CT (or MRI where CT is contraindicated) will be entered in this study.
  \item Documented mutation in gBRCA1 or gBRCA2 that is predicted to be deleterious or suspected deleterious.
  \item Patients are on treatment with a first line platinum-based (cisplatin, carboplatin or oxaliplatin) regimen for metastatic pancreas cancer, have received a minimum of 16 weeks of continuous platinum treatment, and have no evidence of progression based on investigator's opinion.
  \item Patients who have received platinum as potentially curative treatment for a prior cancer (e.g., ovarian cancer) or as adjuvant/neoadjuvant treatment for pancreas cancer are eligible, provided at least 12 months have elapsed between the last dose of platinum-based treatment and initiation of the platinum-based chemotherapy for metastatic pancreas cancer.
\end{itemize}

\paragraph{Major Exclusion Criteria}
\begin{itemize}
  \item gBRCA1 and/or gBRCA2 mutations that are considered to be non-detrimental (e.g., ``Variants of uncertain clinical significance'' or ``Variant of unknown significance'' or ``Variant, favour polymorphism'' or ``benign polymorphism'' etc).
  \item Progression of tumour between start of first line platinum-based chemotherapy for metastatic pancreas cancer and randomisation.
  \item Cytotoxic chemotherapy or non-hormonal targeted therapy within 28 days of Cycle 1 Day 1 is not permitted.
  \item Exposure to an investigational product within 30 days or 5 half lives (whichever is longer) prior to randomisation.
  \item Any previous treatment with a PARP inhibitor, including Olaparib.
\end{itemize}

\subsection*{Ledermann 2014 (NCT00753545)}

\paragraph{Inclusion Criteria}
\begin{itemize}
  \item Female patients with histologically diagnosed serous ovarian cancer or recurrent serous ovarian cancer.
  \item Patients must have completed at least 2 previous courses of platinum containing therapy; the patient must have been platinum sensitive to the penultimate chemo regimen.
  \item For the last chemotherapy course prior to enrolment on the study, patients must have demonstrated an objective stable maintained response (partial or complete response) and this response needs to be maintained until completion of chemotherapy.
  \item Patients must be treated on the study within 8 wks of completion of their final dose of the platinum containing regimen.
\end{itemize}

\paragraph{Exclusion Criteria}
\begin{itemize}
  \item Previous treatment with PARP inhibitors including AZD2281.
  \item Patients with low grade ovarian carcinoma.
  \item Patients who have had drainage of their ascites during the final 2 cycles of their last chemotherapy regimen prior to enrolment on the study.
  \item Patients receiving any chemotherapy, radiotherapy (except for palliative reasons), within 2 weeks from the last dose prior to study entry (or a longer period depending on the defined characteristics of the agents used).
\end{itemize}

\subsection*{Moore 2018 (NCT01844986)}
\paragraph{Inclusion Criteria}
\begin{itemize}
  \item Female patients with newly diagnosed, histologically confirmed, high risk advanced (FIGO stage III - IV) BRCA mutated high grade serous or high grade endometrioid ovarian cancer, primary peritoneal cancer and / or fallopian - tube cancer who have completed first line platinum based chemotherapy (intravenous or intraperitoneal).
  \item Stage III patients must have had one attempt at optimal debulking surgery (upfront or interval debulking). Stage IV patients must have had either a biopsy and/or upfront or interval debulking surgery.
  \item Documented mutation in BRCA1 or BRCA2 that is predicted to be deleterious or suspected deleterious (known or predicted to be detrimental/lead to loss of function).
  \item Patients who have completed first line platinum (e.g. carboplatin or cisplatin), containing therapy (intravenous or intraperitoneal) prior to randomisation:
  \item Patients must have, in the opinion of the investigator, clinical complete response or partial response and have no clinical evidence of disease progression on the post treatment scan or rising CA-125 level, following completion of this chemotherapy course. Patients with stable disease on the post-treatment scan at completion of first line platinum-containing therapy are not eligible for the study.
  \item Patients must be randomized within 8 weeks of their last dose of chemotherapy.
\end{itemize}

\paragraph{Exclusion Criteria}
\begin{itemize}
  \item BRCA1 and/or BRCA2 mutations that are considered to be non detrimental (e.g. "Variants of uncertain clinical significance" or "Variant of unknown significance" or "Variant, favor polymorphism" or "benign polymorphism" etc).
  \item Patients with early stage disease (FIGO Stage I, IIA, IIB or IIC).
  \item Stable disease or progressive disease on the post-treatment scan or clinical evidence of progression at the end of the patient's first line chemotherapy treatment.
  \item Patients where more than one debulking surgery has been performed before randomisation to the study. (Patients who, at the time of diagnosis, are deemed to be unresectable and undergo only a biopsy or oophorectomy but then go on to receive chemotherapy and interval debulking surgery are eligible).
  \item Patients who have previously been diagnosed and treated for earlier stage ovarian, fallopian tube or primary peritoneal cancer.
  \item Patients who have previously received chemotherapy for any abdominal or pelvic tumour, including treatment for prior diagnosis at an earlier stage for their ovarian, fallopian tube or primary peritoneal cancer. (Patients who have received prior adjuvant chemotherapy for localised breast cancer may be eligible, provided that it was completed more than three years prior to registration, and that the patient remains free of recurrent or metastatic disease).
  \item Patients with synchronous primary endometrial cancer unless both of the following criteria are met: 1) stage <2 2) less than 60 years old at the time of diagnosis of endometrial cancer with stage IA or IB grade 1 or 2, or stage IA grade 3 endometrioid adenocarcinoma OR $\geq$ 60 years old at the time of diagnosis of endometrial cancer with Stage IA grade 1 or 2 endometrioid adenocarcinoma. Patients with serous or clear cell adenocarcinoma or carcinosarcoma of the endometrium are not eligible.
\end{itemize}

\subsection*{Pujade-Lauraine 2017 (NCT01874353)}

\paragraph{Inclusion Criteria}
\begin{itemize}
  \item Patients must be $\geq$ 18 years of age.
  \begin{itemize}
    \item Female patients with histologically diagnosed relapsed high grade serous ovarian cancer (including primary peritoneal and / or fallopian tube cancer) or high grade endometrioid cancer.
    \item Documented mutation in BRCA1 or BRCA2 that is predicted to be deleterious or suspected deleterious (known or predicted to be detrimental/lead to loss of function).
    \item Patients who have received at least 2 previous lines of platinum containing therapy prior to randomisation.
  \end{itemize}
  \item For the penultimate chemotherapy course prior to enrolment on the study:
  \begin{itemize}
    \item Patient defined as platinum sensitive after this treatment; defined as disease progression greater than 6 months after completion of their last dose of platinum chemotherapy.
  \end{itemize}
  \item For the last chemotherapy course immediately prior to randomisation on the study:
  \begin{itemize}
    \item Patients must be, in the opinion of the investigator, in response (partial or complete radiological response), or may have no evidence of disease (if optimal cytoreductive surgery was conducted prior to chemotherapy), and no evidence of a rising CA-125, following completion of this chemotherapy course.
    \item Patient must have received a platinum based chemotherapy regimen (e.g. carboplatin or cisplatin) and have received at least 4 cycles of treatment.
    \item Patients must be randomized within 8 weeks of their last dose of chemotherapy.
    \item Maintenance treatment is allowed at the end of the penultimate platinum regimen, including bevacizumab.
  \end{itemize}
\end{itemize}

\paragraph{Exclusion Criteria}
\begin{itemize}
  \item Involvement in the planning and/or conduct of the study (applies to both AstraZeneca staff and/or staff at the study site).
  \item BRCA 1 and/or BRCA2 mutations that are considered to be non detrimental (e.g., ``Variants of uncertain clinical significance'' or ``Variant of unknown significance'' or ``Variant, favor polymorphism'' or ``benign polymorphism'' etc.)
  \item Patients who have had drainage of their ascites during the final 2 cycles of their last chemotherapy regimen prior to enrolment on the study.
\end{itemize}

\end{document}